\pgfplotsset{compat=1.10}
\begin{document}

\newtheorem{thm}{Theorem}[section]
\newtheorem{cor}[thm]{Corollary}
\newtheorem{prop}[thm]{Proposition}
\newtheorem{lem}[thm]{Lemma}
%
\theoremstyle{definition}
\newtheorem{rem}[thm]{Remark}
\newtheorem{defn}[thm]{Definition}
\newtheorem{note}[thm]{Note}
\newtheorem{eg}[thm]{Example}
\newcommand{\Prf}{\noindent\textbf{Proof.\ }}
\newcommand{\bx}{\hfill$\blacksquare$\medbreak}
\newcommand{\upbx}{\vspace{-2.5\baselineskip}\newline\hbox{}%
\hfill$\blacksquare$\newline\medbreak}
\newcommand{\eqbx}[1]{\medbreak\hfill\(\displaystyle #1\)\bx}

\newcommand{\FFock}{\mathcal{F}}
\newcommand{\kil}{\mathsf{k}}
\newcommand{\Hil}{\mathsf{H}}
\newcommand{\hil}{\mathsf{h}}
\newcommand{\Kil}{\mathsf{K}}
\newcommand{\Real}{\mathbb{R}}
\newcommand{\Rplus}{\Real_+}

%

\newcommand{\bC}{{\mathbb{C}}}
\newcommand{\bD}{{\mathbb{D}}}
\newcommand{\bK}{{\mathbb{K}}}
\newcommand{\bN}{{\mathbb{N}}}
\newcommand{\bQ}{{\mathbb{Q}}}
\newcommand{\bR}{{\mathbb{R}}}
\newcommand{\bT}{{\mathbb{T}}}
\newcommand{\bX}{{\mathbb{X}}}
\newcommand{\bZ}{{\mathbb{Z}}}
\newcommand{\bH}{{\mathbb{H}}}
\newcommand{\BH}{{\B(\H)}}
\newcommand{\bsl}{\setminus}
\newcommand{\ca}{\mathrm{C}^*}
\newcommand{\cstar}{\mathrm{C}^*}
\newcommand{\cenv}{\mathrm{C}^*_{\text{env}}}
\newcommand{\rip}{\rangle}
\newcommand{\ol}{\overline}
\newcommand{\td}{Widetilde}
\newcommand{\Wh}{Widehat}
\newcommand{\sot}{\textsc{sot}}
\newcommand{\Wot}{\textsc{wot}}
\newcommand{\Wotclos}[1]{\ol{#1}^{\textsc{wot}}}
 \newcommand{\A}{{\mathcal{A}}}
 \newcommand{\B}{{\mathcal{B}}}
 \newcommand{\C}{{\mathcal{C}}}
 \newcommand{\D}{{\mathcal{D}}}
 \newcommand{\E}{{\mathcal{E}}}
 \newcommand{\F}{{\mathcal{F}}}
 \newcommand{\G}{{\mathcal{G}}}
\renewcommand{\H}{{\mathcal{H}}}
 \newcommand{\I}{{\mathcal{I}}}
 \newcommand{\J}{{\mathcal{J}}}
 \newcommand{\K}{{\mathcal{K}}}
\renewcommand{\L}{{\mathcal{L}}}
 \newcommand{\M}{{\mathcal{M}}}
 \newcommand{\N}{{\mathcal{N}}}
\renewcommand{\O}{{\mathcal{O}}}
\renewcommand{\P}{{\mathcal{P}}}
 \newcommand{\Q}{{\mathcal{Q}}}
 \newcommand{\R}{{\mathcal{R}}}
\renewcommand{\S}{{\mathcal{S}}}
 \newcommand{\T}{{\mathcal{T}}}
 \newcommand{\U}{{\mathcal{U}}}
 \newcommand{\V}{{\mathcal{V}}}
 \newcommand{\W}{{\mathcal{W}}}
 \newcommand{\X}{{\mathcal{X}}}
 \newcommand{\Y}{{\mathcal{Y}}}
 \newcommand{\Z}{{\mathcal{Z}}}

\newcommand{\fA}{{\mathfrak{A}}}
\newcommand{\fB}{{\mathfrak{B}}}
\newcommand{\fC}{{\mathfrak{C}}}
\newcommand{\fD}{{\mathfrak{D}}}
\newcommand{\fE}{{\mathfrak{E}}}
\newcommand{\fF}{{\mathfrak{F}}}
\newcommand{\fG}{{\mathfrak{G}}}
\newcommand{\fH}{{\mathfrak{H}}}
\newcommand{\fI}{{\mathfrak{I}}}
\newcommand{\fJ}{{\mathfrak{J}}}
\newcommand{\fK}{{\mathfrak{K}}}
\newcommand{\fL}{{\mathfrak{L}}}
\newcommand{\fM}{{\mathfrak{M}}}
\newcommand{\fN}{{\mathfrak{N}}}
\newcommand{\fO}{{\mathfrak{O}}}
\newcommand{\fP}{{\mathfrak{P}}}
\newcommand{\fQ}{{\mathfrak{Q}}}
\newcommand{\fR}{{\mathfrak{R}}}
\newcommand{\fS}{{\mathfrak{S}}}
\newcommand{\fT}{{\mathfrak{T}}}
\newcommand{\fU}{{\mathfrak{U}}}
\newcommand{\fV}{{\mathfrak{V}}}
\newcommand{\fW}{{\mathfrak{W}}}
\newcommand{\fX}{{\mathfrak{X}}}
\newcommand{\fY}{{\mathfrak{Y}}}
\newcommand{\fZ}{{\mathfrak{Z}}}

\newcommand{\sgn}{\operatorname{sgn}}
\newcommand{\rank}{\operatorname{rank}}
\newcommand{\supp}{\operatorname{supp}}
\newcommand{\dist}{\operatorname{dist}}
\newcommand{\Aut}{\operatorname{Aut}}
\newcommand{\Aff}{\operatorname{Aff}}
\newcommand{\Cknet}{{\mathcal{C}_{{\rm Knet}}}}
\newcommand{\Ckag}{{\mathcal{C}_{{\rm kag}}}}
\newcommand{\GL}{\operatorname{GL}}
\newcommand{\ul}{\underline}

\title[The first-order flexibility of a crystal framework]{The first-order flexibility of a crystal framework}


\author[E. Kastis and S.C. Power]{E. Kastis and S.C. Power}
\address{Dept.\ Math.\ Stats.\\ Lancaster University\\
Lancaster LA1 4YF \\U.K. }
\email{l.kastis@lancaster.ac.uk}
\email{s.power@lancaster.ac.uk}


 \thanks{2000 {\it  Mathematics Subject Classification.}
{52C25, 13E05, 74N05} \\
Key words and phrases: periodic framework, crystal, rigidity, flexibility, aperiodic phase transition\\
Supported by EPSRC grant \quad  EP/P01108X/1 \emph{Infinite bond-node frameworks}}

\date{}

\begin{abstract} 
Four sets of necessary and sufficient conditions are obtained for the first-order rigidity of a periodic bond-node framework $\C$ in $\bR^d$ which is of crystallographic type.
In particular, an extremal rank characterisation is obtained which incorporates a multi-variable matrix-valued transfer function $\Psi_\C(z)$ defined on the product space $\bC^d_* = (\bC\backslash \{0\})^d$. 
In general the first-order flex space 
is shown to be the closed linear span of polynomially weighted geometric velocity fields whose geometric multi-factors in $\bC^d_*$ lie in a finite set.  Paradoxically, first-order rigid  crystal frameworks may possess nontrivial nondifferentiable continuous motions. 
The examples  given are associated with aperiodic displacive phase transitions between periodic states.
\end{abstract}

\maketitle

\section{Introduction}\label{s:intro} Let $\C$ be a periodic bar-joint framework in  $\bR^d$, where $d \geq 2$, which is of  crystallographic type.
The vector space $\F(\C;\bR)$ of real infinitesimal flexes, or first-order flexes,
is the space of $\bR^d$-valued velocity fields  
on the joints of $\C$ which satisfy the first-order flex condition for every bar. 
This space contains the finite-dimensional vector space $\F_{\rm rig}(\C;\bR)$ for rigid body motions 
and, as in the theory of finite bar-joint frameworks (\cite{asi-rot}, \cite{gra-ser-ser}), the crystal framework $\C$ is said to be \emph{infinitesimally rigid}, or \emph{first-order rigid},  if   $\F(\C;\bR) = \F_{\rm rig}(\C;\bR)$. See Owen and Power \cite{owe-pow-crystal}, for example. 
There have been a number of recent theoretical accounts of flexibility and rigidity in infinite periodic structures, such as  \cite{con-she-smi},  \cite{mal-the-1}, \cite{nix-ros},  \cite{sch-tan}. Also  in materials science, over a much longer period,  there have been extensive studies of flexibility, stability and phonon modes, such as \cite{bor}, \cite{dov-exotic}, \cite{gid-et-al}, \cite{kap-et-al}, \cite{weg}.
However these accounts generally assume some form of periodic boundary conditions and so far there has been no characterisation given for first-order rigidity per se.  
In what follows we obtain four sets of
necessary and sufficient conditions by using completely new methods, taken from commutative algebra and algebraic spectral synthesis. 

By simple linearity, the rigidity condition is equivalent to the corresponding equality, $\F(\C;\bC) = \F_{\rm rig}(\C;\bC)$, for complex scalars and so, as usual, we
consider throughout complex velocity fields and complex infinitesimal flexes.
An evident necessary condition  is the triviality of a \emph{geometric flex spectrum}  $\Gamma(\C)$ associated with  $\C$. This is a subset of the product $\bC_*^d=(\bC\backslash \{0\})^d$ which extends the rigid unit mode (RUM) spectrum in the $d$-torus $\bT^d$ which underlies the analysis of low energy phonon modes (mechanical modes) and almost periodic flexes. See \cite{bad-kit-pow}, \cite{bad-kit-pow-2}, \cite{owe-pow-crystal} and  \cite{pow-poly} for example. 
A point $\omega= (\omega_1, \dots , \omega_d)$ in the geometric flex spectrum corresponds to nonzero velocity fields 
which are $d$-periodic modulo the nonzero multiplicative factors given by the components of $\omega$. { We refer to such a velocity field  as a \emph{geometric flex},
or \emph{factor-periodic flex}, with 
\emph{multi-factor} $\omega$. It follows that infinitesimal rigidity implies that the geometric spectrum is trivial in the sense of reducing to the point
$\ul{1}=(1,\dots , 1)$. Additionally, the space of periodic flexes for the periodic structure, taken in the movable lattice sense, must coincide with the $d$-dimensional space of infinitesimal translations. We shall show, in particular, that these two conditions, stated in  condition (iii) of Theorem \ref{t:irigid}, are sufficient as well as necessary. 

Our main approach is to view the geometric flex spectrum in two other ways. Firstly, in difference equation terms, it is the set of solutions of the characteristic equations of a set of linear difference equations, for vector-valued multi-sequences, that arises from a choice of periodic structure for $\C$. These solutions are the points of rank degeneracy of a matrix-valued \emph{transfer function} $\Psi_\C(z)$ on $\bC_*^d$. This in turn can be viewed as the extension of the symbol function $\Phi_\C(z)$, with domain $\bT^d$, associated with rigid unit modes. 
Secondly,  in commutative algebra terms, the geometric flex spectrum is related to the  $\bC[z_1,\dots ,z_d]$-module generated by the rows of the transfer function associated with the periodic structure. 
Our proofs exploit these perspectives together with Noetherian module variants of fundamental arguments in algebraic spectral synthesis which are due to Marcel Lefranc \cite{lef}. In particular we use the Hahn Banach separation theorem for topological vectors spaces of sequences, we appeal to Hilbert's strong Nullstellensatz and Krull's intersection theorem, and we make use of the Lasker-Noether primary decomposition of Noetherian modules.

For a general crystal framework the space of all first-order flexes is invariant under the natural translation operators and is closed with respect to the topology of coordinatewise convergence. It is of significance then to have available a characterisation of general closed shift-invariant subspaces of the space $C(\bZ^d;\bC^r)$ of vector-valued functions on $\bZ^d$. This topic is of independent interest and is the subject of Section \ref{s:lefranc}, which is essentially self-contained with full proofs. In particular, in
  Theorem \ref{t:vectorialsynthesis} we generalise Lefranc's spectral synthesis theorem for $C(\bZ^d)$ 
to this vector-valued setting. 

From these results  we see in Theorem \ref{t:flexthm}  that $\F(\C;\bC)$ 
is the closed linear span of flexes which are {vector-valued polynomially weighted geometric multi-sequences}.
Moreover there is a dense linear span of this type where the associated geometric multi-factors $\omega \in \bC^d_*$ of the velocity fields are finite in number, where this finiteness derive from the Lasker-Noether decomposition 
of a $\bC[z_1,\dots ,z_d]$-module for $\C$. The theorem may thus be viewed as providing an answer, albeit an ambiguous one if the flex space is infinite-dimensional, to the informal question: \emph{What are the fundamental first-order modes of a crystal framework ?} 
Also, it follows that  $\F(\C;\bC)$ is finite-dimensional 
if and only if the geometric spectrum is a finite set.

Paradoxically, a first-order rigid  crystal framework may possess a nontrivial continuous motion, and indeed such a motion is necessarily non-smooth. Our  examples, in Section \ref{s:examples}, follow from elementary geometric arguments associated with aperiodic displacive phase transitions between periodic states.


\section{Preliminaries}\label{s:prelim} A \emph{crystal framework} $\C$ in $\bR^d$ is defined to be a  bar-joint framework $(G, p)$ where $G= (V,E)$ is a  countable simple graph and $p:V \to \bR^d$ is an injective translationally periodic placement of the vertices as joints $p(v)$. 
It is assumed here, moreover, that the periodicity is determined by a basis of $d$ linearly independent vectors and that the corresponding translation classes for the joints and bars are finite in number. The assumption that  
$p:V \to \bR^d$ is injective is not essential although with this relaxation one should assume that each bar $p(v)p(w)$ has positive length $\|p(v)-p(w)\|$.

 The complex infinitesimal flex space $\F(\C;\bC)$ is the vector space of  $\bC^d$-valued functions $u$ on the set of joints satisfying the first-order flex conditions
\[
(u(p(v)) - u(p(w)))\cdot (p(v) - p(w))=0, \quad vw \in E.
\]
Coordinates for this vector space and the space $\V(\C;\bC)$  of all velocity fields may be introduced, first, by making a (possibly different) choice of $d$ linearly independent periodicity vectors for $\C$, which we shall denote as
\[
\ul{a} = \{a_1, \dots , a_d\},
\]
and, second, by choosing finite sets, $F_v$ and $F_e$ respectively, for the corresponding translation classes of the joints  and the bars.  We refer to the basis choice  $\ul{a}$ as a choice of \emph{periodic structure} for $\C$ (following terminology from Delgado-Freidrichs \cite{del}) while the pair $\{F_v, F_e\}$ represents a choice of \emph{motif} for this periodic structure  \cite{owe-pow-crystal}, \cite{pow-poly}. 


\subsection{Transfer functions and $\bC(z)$-modules} Let $\bC[z] = \bC[z_1,\dots ,z_d]$ be the ring of polynomials
in the commuting variable $z_1, \dots , z_d$ over the field $\bC$.  Identify this with the algebra of multi-variable complex polynomials defined on $\bC_*^d$ and write $\bC(z)$ for the containing ring of functions on $\bC_*^d$ generated by  the coordinate functions $z_1,\dots, z_d$  and their inverses $ z_1^{-1},\dots , z_d^{-1}$.
This is the ring of multivariate complex trigonometric polynomials which we shall refer to as the \emph{Laurent polynomial ring}.

Let $n=|F_v|$ and $m=|F_e|$. 
Borrowing terminology from the theory of difference equations we now define the \emph{transfer function} $\Psi_\C(z)$ of $\C$  which is an $m \times dn$ matrix of functions in $\bC(z)$ determined by the pair $\{F_v, F_e\}$. We label the vertices in $V$, and hence the joints $p(v)$ of $\C$, by pairs $(v,k)$ where $p(v,0) = p(v) \in F_v$ and $p(v,k)$, for $k \in \bZ^d$, is the joint
$p(v,0)+ k_1a_1+ \dots + k_da_d$.



\begin{defn}\label{d:transferfunction}
Let $\C$ be a crystal framework in $\bR^d$ with motif 
$\{F_v, F_e\}$ and let $p(e)=p(v,k)-p(w,l)$ be the vector for the bar $p(v,k)p(w,l)$ in $F_e$ associated with the edge $e=(v,k)(w,l)$. 

(i) The \emph{transfer function} $\Psi_\C(z)$ is the $m \times dn$ matrix over the Laurent polynomial ring whose rows are labelled by the edges $e$ for the bars of $F_e$ and whose columns are labelled by the vertices $v$ for the joints of $F_v$ and coordinate indices in $\{1,\dots ,d\}$.
The row for an edge $e= (v,k)(w,l)$ with $v\neq w$ takes the form
\[\kbordermatrix{& & & & v & & & & w & & & \\
e & 0 & \cdots &0 & p(e){z}^{-k} &0& \cdots&0 &- p(e){z}^{-l} &0& \cdots &0 }\]
while if $v=w$ it takes the form
\[\kbordermatrix{& & & & v & & & \\
e & 0 & \cdots &0 & p(e)({z}^{-k} - {z}^{-l}) &0& \cdots&0 }
\]

(ii) The $\bC(z)$-module of $\C$, associated with the motif $\{F_v, F_e\}$, is the submodule 
\[
M(\C) = \bC(z)p_1(z) + \dots + \bC(z)p_m(z)
\] 
of the $\bC(z)$-module $\bC(z)\otimes \bC^{dn}$, where
$p_1(z), \dots , p_m(z)$ are the vector-valued functions given by the rows of the transfer function.

\end{defn}
\medskip

For a given periodic structure one may rechoose the set $F_v$, through an appropriate translation into the positive cone of $\bR^d$, so that the multi-variable vector-valued polynomials $p_i(z)$ are replaced by vector-valued polynomials $z^kp_i(z)$, in $\bC[z]\otimes  \bC^{dn},$ for some fixed $k$. Henceforth we assume that this choice has been made. We may therefore define the $\bC[z]$-module $M(\C)^*$ as the submodule of the left $\bC[z]$-module $\bC[z]\otimes \bC^{dn}$
generated by the vector-valued polynomials $p_1(z), \dots , p_m(z)$. In particular we have
\[
M(\C)^* = M(\C)\cap (\bC[z]\otimes  \bC^{dn}).
\]

Different choices of $F_e$ for the same periodic structure give transfer functions that are equivalent in a natural way. Specifically, the
replacement of a motif edge  by an alternative representative results in the multiplication of
the appropriate row  by a monomial. Also any relabelling of the motif joints and bars corresponds to column and row permutations. It follows that any two transfer functions,
$\Psi_1(z)$ and $\Psi_2(z)$, for a given periodic structure  satisfy the equation
$\Psi_2(z)=D_1(z)A\Psi_1(z)BD_2(z)$,
where $D_1(z)$ and $D_2(z)$ are diagonal monomial matrices and $A, B$ are permutation matrices.

The values $z= \omega$ for which the rank of $\Psi_\C(\omega)$
is less than $dn$ lead to a finite-dimensional space of complex infinitesimal flexes which are periodic up to a multiplicative factor. 
Such flexes are referred to here as  \emph{factor-periodic flexes} since they are characterised by a set of  equations of the form
\[
u_k = \omega^k u_0 = \omega_1^{k_1}\cdots \omega_d^{k_d}u_0,
\]
which relate the (complex) velocity $u_0$ of a joint $p(v)$ in $F_v$
to the velocity $u_k$ of the 
joint $p(v,k)$ for $k\in \bZ^d$. 


\begin{defn}Let $\C$ be a crystal framework in $\bR^d$ with a choice of periodic structure and labelled motif, and associated transfer function $\Psi_\C(z)$.

(i) The \emph{geometric flex spectrum} of $\C$
is the set
$$
\Gamma(\C) = \{\omega \in \bC_*^d: \ker \Psi_\C(\omega^{-1}) \neq \{0\}\}.
$$ 

(ii) The \emph{rigid unit mode spectrum} or {RUM spectrum of $\C$} is the subset $\Omega(\C) = \Gamma(\C)\cap \bT^d$. 
\end{defn}

From our earlier remarks it follows that the sets $\Gamma(\C), \Omega(\C)$ depend only on the choice of periodic structure (up to coordinate relabelling).

The geometric flex spectrum was introduced recently in Badri, Kitson and Power \cite{bad-kit-pow-2} in connection with the existence and nonexistence of bases of localised flexes which generate the entire space of infinitesimal flexes.  We comment more on such bases in Section \ref{s:examples} and Remark \ref{r:bases}.

\subsection{Velocity fields and forms of rigidity}\label{s:fieldsandrigidity} All variants of infinitesimal rigidity depend on a choice of vector space of preferred velocity fields. In this section we define such vector spaces and the resulting forms of periodic and aperiodic infinitesimal rigidity. We first describe a space of \emph{exponential velocity fields} which plays a key role in our main results.

Let $a$ be a vector in $\bC^{dn}$ which is in the nullspace of $\Psi_\C(\omega^{-1})$. Then the function 
\[
u : \bZ^d \to \bC^{dn},\quad  k \to \omega^ka
\]
defines a factor-periodic velocity field which is an infinitesimal flex \cite{bad-kit-pow}, \cite{pow-poly}. {In this  coordinate formalism a complex velocity field for the framework $\C$ is given by a function (or vector-valued multi-sequence) $u$ in $C(\bZ^d;\bC^{nd})$ where $u(k)$ is a combined velocity vector for the $n$ joints which are the translates of the motif joints  by the vector $a(k)=k_1a_1+ \dots + k_da_d$. Explicitly, with $F_v= \{v_1, \dots , v_n\}$, we have
\[
u(k) = (u(p(v_1,k)),\dots , u(p(v_n,k)))
\]  
where $u(p(v_i,k))$ is the velocity vector at the joint $p(v_i,k)=p(v_i)+a(k)$, and where we have introduced notation $(v_i,k)$ for the vertices of the underlying graph $G$.}

We now introduce terminology for factor-periodic velocity fields and related velocity fields.
Let $\omega \in\bC_*^d$ and write $e_\omega \in  C(\bZ^d)$ for the \emph{geometric multi-sequence} given by $e_\omega(k) = \omega^k$, for $k\in \bZ^d$. 
More generally, a \emph{polynomially weighted geometric  multi-sequence}, or $pg$-sequence, is a multi-sequence in  $C(\bZ^d)$ of the form
$e_{\omega, q}: k \to q(k)\omega^k$, where $q(z)$ is a polynomial in $\bC[z]$. 
Define  $\V_{\rm exp}(\C;\bC)$, the space of \emph{exponential velocity fields}, to be the subspace of $\V(\C;\bC)$ formed by  the linear span of the velocity fields
$e_{\omega, q}\otimes a$, for all $\omega$ in $\bC^d_*$, all polynomials $q(z)$ in $\bC[z]$ and all vectors $a$ in 
$\bC^{dn}$. 
This space does not depend on a choice of periodic structure. 

An infinitesimal flex in 
$\V_{\rm exp}(\C;\bC)$ is referred to as an \emph{exponential flex} and these vectors determine a subspace, denoted
$\F_{\rm exp}(\C;\bC)$. That is,
\[
\F_{\rm exp}(\C;\bC) = \V_{\rm exp}(\C;\bC)\cap \F(\C;\bC). 
\]
We say that $\C$ is $\V_{\rm exp}$-rigid, or \emph{exponentially rigid} if $\F_{\rm exp}(\C;\bC)= \F_{\rm rig}(\C;\bC)$.

We next recall various forms of periodic rigidity, each of which is associated with a subspace of $\V_{\rm exp}(\C;\bC)$.

Given a choice of periodic structure for $\C$ define $\V_{\rm per}(\C;\bC)$ to be the associated vector space  of  periodic velocity fields and  write $\F_{\rm per}(\C;\bC)$ for the subspace of  periodic first-order flexes. When there is cause for confusion these flexes are also referred to a \emph{strictly} periodic flexes, with the periodic structure understood. The periodic flexes are the factor-periodic flexes for the multi-factor $\omega = \ul{1}=(1,\dots ,1)$.
The framework $\C$ is said to be  \emph{periodically rigid}, or $\V_{\rm per}$-rigid, if  $\F_{\rm per}(\C;\bC)\subseteq \F_{\rm rig}(\C;\bC)$. The inclusion here is proper since infinitesimal rotations are not periodic infinitesimal flexes. The terms \emph{fixed lattice rigid}, \emph{fixed torus rigid} and \emph{strictly periodically rigid} are also used for this notion of rigidity.

A weaker form of periodic rigidity, known as \emph{flexible lattice periodic rigidity} (and also termed \emph{flexible torus rigidity} or simply \emph{periodic rigidity}) is associated with a larger space of velocity fields $u \in C(\bZ^d;\bC^{nd})$ which have the form
\[
u(k) = u(0) + (Xk,\dots ,Xk), \quad \mbox{where}\quad   X \in M_d(\bC).
\]
These form an $(nd + d^2)$-dimensional space of velocity fields which  are periodic modulo an affine correction in which the $n$ joints in the $k^{th}$-cell each receive an additional velocity $Xk$. We write this space as $\V_{\rm fper}(\C;\bC)$ and note that we have a direct sum
\[
\V_{\rm fper}(\C;\bC):=\V_{\rm per}(\C;\bC)+\V_{\rm axial}(\C;\bC)
\]
where $\V_{\rm axial}(\C;\bC)$ is the space of the \emph{axial velocity fields}, 
$u: k \to (Xk,\dots ,Xk)$. 

\begin{lem}\label{l:inclusions} Let $\C$ be a crystal framework. Then
$ \V_{\rm rig}(\C;\bC)\subseteq\V_{\rm fper}(\C;\bC) \subseteq\V_{\rm exp}(\C;\bC).
$ 
\end{lem}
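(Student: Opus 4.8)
The plan is to verify both inclusions by writing every field explicitly in the coordinate formalism of $C(\bZ^d;\bC^{nd})$ and then recognising the resulting expressions as members of the target spaces. For the first inclusion, $\V_{\rm rig}(\C;\bC)\subseteq\V_{\rm fper}(\C;\bC)$, I would start from the fact that a rigid-body velocity field is the restriction to the joints of an affine field $x\mapsto Sx+t$ on $\bR^d$, where $S$ is a (complex) skew-symmetric matrix and $t\in\bC^d$. Evaluating at the joint $p(v_i,k)=p(v_i)+a(k)$, with $a(k)=k_1a_1+\dots+k_da_d$, gives
\[
u(p(v_i,k)) = \bigl(Sp(v_i)+t\bigr) + Sa(k).
\]
The first summand is independent of $k$, so the vector $u(0)=(Sp(v_1)+t,\dots,Sp(v_n)+t)$ determines a periodic part, while the second summand $Sa(k)=(SA)k$, where $A$ is the matrix with columns $a_1,\dots,a_d$, is common to all $n$ joints of the $k$-th cell. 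Setting $X=SA\in M_d(\bC)$ then exhibits $u$ in the form $u(k)=u(0)+(Xk,\dots,Xk)$, which is exactly the defining form of $\V_{\rm fper}(\C;\bC)$.

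For the second inclusion it suffices, by the direct-sum decomposition $\V_{\rm fper}=\V_{\rm per}+\V_{\rm axial}$ recorded above, to place each summand in $\V_{\rm exp}(\C;\bC)$. A periodic field has the constant value $u(k)=a$ for some $a\in\bC^{nd}$, and this is precisely $e_{\ul{1}}\otimes a$ with the trivial polynomial $q\equiv 1$, hence lies in $\V_{\rm exp}(\C;\bC)$. For an axial field $u(k)=(Xk,\dots,Xk)$ I would expand $Xk=\sum_{j=1}^d k_j\,X_{\cdot j}$ in terms of the columns $X_{\cdot j}$ of $X$, set $c_j=(X_{\cdot j},\dots,X_{\cdot j})\in\bC^{nd}$, and observe that the scalar coefficient $k\mapsto k_j$ is the $pg$-sequence $e_{\ul{1},q_j}$ with multi-factor $\ul{1}$ and polynomial $q_j(z)=z_j$. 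Thus $u=\sum_{j=1}^d e_{\ul{1},q_j}\otimes c_j$ is a finite linear combination of generators of $\V_{\rm exp}(\C;\bC)$, completing the argument.

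Both steps are essentially bookkeeping, so I do not anticipate a genuine obstacle; the only points requiring care concern the reconciliation of the two coordinate conventions. The affine field on $\bR^d$ is naturally linear in the physical offset $a(k)$, whereas $\V_{\rm fper}(\C;\bC)$ is defined through the lattice index $k$; since $k\mapsto a(k)$ is the linear isomorphism given by $A$, the substitution $X=SA$ shows the two descriptions coincide and, incidentally, that the skew-symmetry of $S$ plays no role in membership in $\V_{\rm fper}(\C;\bC)$. The second delicate point is simply the recognition that the linear monomials $k_j$ are $pg$-sequences attached to the trivial multi-factor $\ul{1}$, which is what allows the axial fields, growing linearly in $k$, to sit inside the exponential span.
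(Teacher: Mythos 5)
Your proposal is correct and follows essentially the same route as the paper: evaluate the rigid (affine) field at the translated joints $p(v_i)+a(k)$, observe that it splits into a $k$-independent part plus a part linear in $k$, and recognise the linear monomials $k_j$ as $pg$-sequences $e_{\ul{1},q_j}$ with $q_j(z)=z_j$. The only organisational difference is that you treat the second inclusion explicitly via the decomposition $\V_{\rm fper}=\V_{\rm per}+\V_{\rm axial}$, whereas the paper handles translations and rotations separately and leaves that step implicit; your version is, if anything, slightly more complete.
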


\begin{proof}
A translational infinitesimal flex $u:k \to \bC^{nd}$, associated with the velocity $b\in \bC^d$, has the form  $e_{\ul{1},q}\otimes (b,\dots ,b)$ with $q(z)$ identically equal to $1$. In particular it is strictly periodic. On the other hand let $u$ be the rotational infinitesimal flex  associated with the orthogonal matrix $B$ in $M_d(\bR)$, let $(p_1,\dots ,p_n)$ be the vector of joints from a motif for the periodic structure $\ul{a}$, and let $A:k \to k_1a_1+\dots + k_da_d$. Then $u(0)=(B(p_1), \dots , B(p_n))=(b_1,\dots ,b_n)$ and
\[
u(k) = (B(p_1+A(k)),\dots , B(p_n+A(k)))
= (b_1+BA(k),\dots , b_n+ BA(k)).
\] 
The right hand expression is linear in $k_1, \dots k_d$ 
and so $u$ may be written in the form 
\[
\sum_{|j|\leq 1} q_j(k)a_j = \sum_{|j|\leq 1} e_{\ul{1},q_j}\otimes a_j
\]
where $a_j\in \bC^{nd}$ and   $q_j(z)$ is the linear polynomial $z^j$ with total degree ${|j|\leq 1}$. From these observations the inclusions follow.
\end{proof}

Let $\F_{\rm fper}(\C;\bC)= \F(\C;\bC)\cap \V_{\rm fper}(\C;\bC)$.
This is the space of flexible lattice periodic flexes for the given periodic structure. It is also referred to as the space of affinely periodic infinitesimal flexes \cite{con-she-smi}, \cite{pow-aff}.

\begin{defn}\label{d:flaxlatticerigid} A crystal framework $\C$ is said to be \emph{flexible lattice periodically rigid}, or $\V_{\rm fper}$-rigid,  if  $\F_{\rm fper}(\C;\bC)= \F_{\rm rig}(\C;\bC)$.
\end{defn}

Let $\ul{1}$ be the point $(1,\dots ,1)$ in $\Gamma(\C)$. A necessary and sufficient condition for periodic rigidity is that the scalar $m \times dn$ rigidity matrix $R_{\rm per}(\C) = \Psi(\ul{1})$ has rank $dn-d$. Borcea and Streinu \cite{bor-str} have obtained an analogous necessary and sufficient condition for flexible lattice periodic rigidity.   Another derivation of this characterisation is in Power \cite{pow-aff}. The rigidity condition is  the maximality of the rank of  a matrix, which we write here as $R_{\rm fper}(\C)$, which is an augmentation of
$R_{\rm per}(\C)$ by $d^2$ columns associated with the entries of the variable matrix $X$, as in the following definition.
The maximal rank condition is then 
\[
\rank R_{\rm fper}(\C) = dn + d(d-1)/2
\]

\begin{defn}
Let $\C$ be a crystal framework in $\bR^d$ with motif 
$(F_v, F_e)$ and let $p(e)=p(v,k)-p(w,l)$ be the vectors associated with the bars in $F_e$ corresponding to edges $e=(v,k)(w,l)$. 
The flexible lattice periodic rigidity matrix $R_{\rm fper}(\C)$ is the $m \times (dn+d^2)$ matrix whose rows, labelled by the edges  $e$ with $v\neq w$, have the form
\[\kbordermatrix{& & & & v & & & & w & & & &  &\\
e & 0 & \cdots &0 & p(e) &0& \cdots& 0 &-p(e) &0& \cdots & (l_1-k_1)p(e) &\cdots &(l_d-k_d)p(e) }\]
while the rows with $v=w$  take the form
\[\kbordermatrix{& & & &  &\\
& 0&\cdots   & 0 &(l_1-k_1)p(e)& \cdots &(l_d-k_d)p(e)}
\]
\end{defn}


\section{The main results}\label{s:mainresults}
In the next section we define a duality between $\bC[z]$-modules in $\bC[z]\otimes \bC^{dn}$ and shift-invariant subspaces of the space $C(\bZ^d;\bC^{dn})$ of velocity fields. It is this duality that underlies the following definition.

\begin{defn} The \emph{$\bC[z]$-rigidity module} $M_{\rm rig}(d,n)$ associated with a periodic structure for $\C$, and which is also denoted $M_{\rm rig}(\C)$ when the periodic structure is understood, is the annihilator of the space $\F_{\rm rig}(\C;\bC)$ in $\bC[z]\otimes \bC^{dn}$.
\end{defn}

We say that a transfer function $\Psi_\C(z)$ is  \emph{rank extremal} if $\rank \Psi_\C(z) = dn$ for all $z\in \bC_*^{d}$ and the rank of $\Psi_\C(\ul{1})$ is $dn-d$. Also we say that $R_{\rm fper}(\C)$ is \emph{rank extremal} if its rank is $dn + d(d-1)/2$.

\begin{thm}\label{t:irigid}
The following statements are equivalent for a crystal framework $\C$ in  $\bR^d$. 
\medskip 

(i) $\C$ is first-order rigid.
\medskip

(ii)  $\C$ is exponentially rigid.
\medskip

(iii) For a given periodic structure $\C$  there are no nontrivial factor-periodic flexes or flexible lattice periodic flexes.
\medskip

(iv)  For a given periodic structure and motif the transfer function $\Psi_\C(z)$  and the matrix $R_{\rm fper}(\C)$
are rank extremal.

\medskip

(v) For a given periodic structure the $\bC(z)$-module $M(\C)$ agrees with the rigidity module $M_{\rm rig}(\C)$.


\end{thm}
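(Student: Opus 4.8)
The plan is to prove the chain of equivalences (i) $\Leftrightarrow$ (ii) $\Leftrightarrow$ (iii) $\Leftrightarrow$ (iv) $\Leftrightarrow$ (v) by establishing a cycle of implications, exploiting the two-way dictionary between $\bC[z]$-modules and shift-invariant velocity-field spaces that is developed in Section~\ref{s:lefranc}. The easy implications go ``downward'': since $\F_{\rm rig}(\C;\bC)\subseteq\F_{\rm exp}(\C;\bC)\subseteq\F(\C;\bC)$, first-order rigidity (i) trivially forces exponential rigidity (ii), and exponential rigidity in turn forces the absence of nontrivial factor-periodic and flexible-lattice-periodic flexes (iii), because such flexes are themselves exponential velocity fields lying in $\V_{\rm exp}(\C;\bC)$ by Lemma~\ref{l:inclusions} and the explicit description of $\V_{\rm exp}(\C;\bC)$. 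The translation of (iii) into (iv) is purely linear algebra: factor-periodic flexes for a multi-factor $\omega$ correspond exactly to nonzero vectors in $\ker\Psi_\C(\omega^{-1})$, so their absence (apart from the point $\ul1$, where the $d$-dimensional translation space must persist) is the statement $\rank\Psi_\C(z)=dn$ for all $z\in\bC^d_*$ together with $\rank\Psi_\C(\ul1)=dn-d$; the flexible-lattice condition is packaged identically into the rank-extremality of $R_{\rm fper}(\C)$ using the Borcea--Streinu / Power characterisation cited in the excerpt.

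The substantive content is the implication (iii)$\Rightarrow$(i), or equivalently closing the loop from the finite, spectral data back to rigidity of the full infinite flex space. First I would invoke Theorem~\ref{t:flexthm}: the flex space $\F(\C;\bC)$ is the closed linear span of polynomially weighted geometric multi-sequences whose multi-factors $\omega$ lie in the finite set cut out by the geometric flex spectrum $\Gamma(\C)$, the finiteness descending from the Lasker--Noether primary decomposition of the module $M(\C)^*$. Under hypothesis (iii) the spectrum reduces to the single point $\ul1$, so every flex is a $pg$-sequence supported at $\omega=\ul1$, i.e.\ a velocity field of the form $\sum_{|j|\le N} e_{\ul1,q_j}\otimes a_j$ with polynomial weights. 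The task is then to show that any such polynomially weighted $\ul1$-flex is already a rigid-body motion. The flexible-lattice hypothesis controls the degree-$\le 1$ part exactly (the affine corrections $Xk$), and one argues that higher-degree polynomial weights cannot occur by a filtration/multiplicity argument: the order of vanishing of $\det$-minors of $\Psi_\C$ at $\ul1$ equals $d$ precisely when no flex of polynomial degree $\ge 2$ survives, which is guaranteed by the rank-extremality at $\ul1$ combined with the Krull intersection theorem controlling the $\ul1$-primary component.

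For the final equivalence with (v) I would unwind the duality definition directly: $M_{\rm rig}(\C)$ is by definition the annihilator of $\F_{\rm rig}(\C;\bC)$, while the annihilator of the whole flex space $\F(\C;\bC)$ is (up to the passage between $\bC(z)$ and $\bC[z]$ noted in the remark following Definition~\ref{d:transferfunction}) the module $M(\C)$ generated by the rows of the transfer function. Because the annihilator correspondence is an order-reversing bijection between closed shift-invariant subspaces and the relevant submodules, the equality $M(\C)=M_{\rm rig}(\C)$ is equivalent to $\F(\C;\bC)=\F_{\rm rig}(\C;\bC)$, which is (i). Here I must be careful that the relevant flex spaces are closed and shift-invariant so that the vector-valued spectral synthesis theorem (Theorem~\ref{t:vectorialsynthesis}) applies and the double-annihilator recovers the original space with no loss.

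The main obstacle I anticipate is precisely the degree-control step inside (iii)$\Rightarrow$(i): ruling out polynomially weighted flexes of total degree $\ge 2$ at $\omega=\ul1$. The flexible-lattice hypothesis only directly constrains affine ($k$-linear) behaviour, so one must show that the vanishing order of the transfer function at $\ul1$ cannot secretly support a higher-order jet. This is where the commutative-algebra machinery does the real work, and the delicate point is matching the analytic multiplicity (the length of the $\ul1$-primary component of $M(\C)^*$, equivalently $\dim_\bC$ of an associated local quotient) against the geometric count $d$ of independent translations; an off-by-one or a missed embedded primary component would invalidate the argument, so I would verify that rank-extremality forces the $\ul1$-primary component to be exactly the radical ideal with no embedded components.
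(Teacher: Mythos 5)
Your overall skeleton matches the paper's: the easy downward implications (i)$\Rightarrow$(ii)$\Rightarrow$(iii), the linear-algebra identification (iii)$\Leftrightarrow$(iv), the double-annihilator argument for (i)$\Leftrightarrow$(v) via Lemma~\ref{l:doubleperp}, and the use of the vector-valued spectral synthesis theorem to reduce (iii)$\Rightarrow$(i) to ruling out nontrivial $pg$-flexes. You have also correctly isolated the hard point: excluding polynomial weights of total degree $\geq 2$ at $\omega=\ul{1}$ (and, separately, any multi-factor $\omega\neq\ul{1}$).

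The gap is in the mechanism you propose for that hard point. The paper does not use any multiplicity or order-of-vanishing computation; it uses the elementary degree-reduction Lemma~\ref{l:degreereduce}: since $\F(\C;\bC)$ is closed under the shifts, the finite differences $u-(W_j\otimes I)u$ of a $pg$-flex $u_{\omega,h}$ are again flexes, and repeated differencing lowers the multi-degree of $h$. If $\omega\neq\ul{1}$ this terminates in a pure geometric flex $e_\omega\otimes a$, contradicting triviality of $\Gamma(\C)$; if $\omega=\ul{1}$ it terminates in a flex with affine weights, which (after using (iii) once more on the constant differences) is a flexible lattice periodic flex, again contradicting (iii). Your substitute --- that ``the order of vanishing of $\det$-minors of $\Psi_\C$ at $\ul{1}$ equals $d$ precisely when no flex of polynomial degree $\ge 2$ survives'', to be certified by showing the $\ul{1}$-primary component is ``exactly the radical ideal with no embedded components'' --- is both unproven and, as stated, not attainable: even for a first-order rigid framework the local dual at $\ul{1}$ contains the infinitesimal rotations, which are $pg$-flexes with \emph{nonconstant linear} weights, so the $\ul{1}$-primary component of $M(\C)^*$ is never radical and the degree filtration of its inverse system does not stop at degree $0$. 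The Krull intersection theorem also does not play the role you assign it; in the paper it is used only inside Proposition~\ref{p:bigideal} to recover a primary module from its power-series closure, not to bound jet orders. A second, smaller soft spot: you assert that under (iii) ``every flex is a $pg$-sequence supported at $\omega=\ul{1}$'', but linking the root sequence of the Lasker--Noether decomposition to the geometric spectrum itself requires the differencing argument (a root $\omega(i)$ with a nontrivial local dual yields, after degree reduction, a genuine factor-periodic flex with factor $\omega(i)$), so you cannot avoid Lemma~\ref{l:degreereduce} by routing through Theorem~\ref{t:flexthm} alone.
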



\begin{defn}
Let $\C$ be a crystal framework  in  $\bR^d$ with a periodic structure with $n$ translation classes of joints. A \emph{vectorial $pg$-sequence} for $\C$, for this periodic structure, with \emph{geometric index} $\omega \in \bC^d_*$, is a velocity field $u_{\omega, {h}}:\bZ^d \to \bC^{nd}$ of the form
\[
u_{\omega, {h}}: k \to \omega^kh(k) 
\]
where $h(z)$ is a vector-valued polynomial in $\bC[z]\otimes \bC^{dn}$.
\end{defn}

The term \emph{root sequence}  in the following theorem is defined in Section \ref{ss:noetherian} while the term \emph{closed} refers to the topology for coordinate-wise convergence or, more precisely, the topology of pointwise convergence in the space of velocity fields.

\begin{thm}\label{t:flexthm} Let $\C$ be a crystal framework in  $\bR^d$ with a given periodic structure and associated $\bC[z]$-module 
$M(\C)^*$. Then $\F(\C;\bC)$ is the closed linear span
of $pg$-sequences $u_{\omega, {h}}$ in $\F(\C;\bC)$. Moreover, if $\omega(1),\dots ,\omega(s)$ is a root sequence for the Lasker-Noether decomposition of $M(\C)^*$ then  $\F(\C;\bC)$ is the closed linear span of the $pg$-sequences $u_{\omega, {h}}$ in $\F(\C;\bC)$ with geometric indices $\omega(1),\dots ,\omega(s)$.
\end{thm}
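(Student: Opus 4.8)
The plan is to establish Theorem \ref{t:flexthm} as an instance of the vector-valued spectral synthesis theorem (Theorem \ref{t:vectorialsynthesis}) applied to the flex space $\F(\C;\bC)$, combined with a careful analysis of the Lasker-Noether primary decomposition of the module $M(\C)^*$. First I would observe that $\F(\C;\bC)$ is a closed, shift-invariant subspace of $C(\bZ^d;\bC^{nd})$: closedness is immediate from the fact that each first-order flex condition is a continuous linear constraint in the topology of pointwise convergence, and shift-invariance follows because the framework is periodic, so translating a velocity field by a lattice vector carries flexes to flexes. The central duality I would exploit is that between $\bC[z]$-submodules of $\bC[z]\otimes \bC^{dn}$ and closed shift-invariant subspaces of $C(\bZ^d;\bC^{nd})$: under this correspondence, $\F(\C;\bC)$ is the annihilator of the module $M(\C)^*$ generated by the rows $p_1(z),\dots,p_m(z)$ of the transfer function, since the flex conditions are precisely the vanishing of the pairings against these generators and their shifts.

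The first half of the statement then follows by invoking the general synthesis theorem: any closed shift-invariant subspace is the closed linear span of the $pg$-sequences it contains. Concretely, I would identify the $pg$-sequences $u_{\omega,h}$ lying in $\F(\C;\bC)$ as exactly those vector-valued polynomially weighted geometric multi-sequences annihilated by $M(\C)^*$, and the synthesis theorem guarantees these span a dense subspace. The essential input here is that the exponential monomials of the form $e_{\omega,q}\otimes a$ are the joint generalised eigenvectors of the commuting family of shift operators, so spectral synthesis reduces the recovery of $\F(\C;\bC)$ to recovering it from these elementary modes.

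For the sharper, second assertion, I would bring in the Lasker-Noether decomposition $M(\C)^* = Q_1 \cap \dots \cap Q_s$ into primary submodules $Q_i$, each associated with a prime $\mathfrak{p}_i = \sqrt{\operatorname{Ann}(\bC[z]\otimes\bC^{dn}/Q_i)}$ whose variety $V(\mathfrak{p}_i)$ determines a geometric point $\omega(i)\in\bC^d_*$ via the inversion $z=\omega^{-1}$ built into the definition of $\Gamma(\C)$. The root sequence $\omega(1),\dots,\omega(s)$ is precisely the list of these associated points. The key algebraic fact, drawn from Krull's intersection theorem and the Nullstellensatz as flagged in the introduction, is that the annihilator (i.e.\ the dual flex space) of a primary submodule $Q_i$ consists entirely of $pg$-sequences whose geometric index is the single point $\omega(i)$, the polynomial weight absorbing the nilpotent/embedded structure governed by the primary exponent. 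Intersecting modules corresponds dually to summing annihilators, so $\F(\C;\bC)$ is the closed sum of the spaces dual to the $Q_i$, each contributing $pg$-sequences with geometric index among $\omega(1),\dots,\omega(s)$.

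The main obstacle I anticipate is the precise matching in the primary case: showing that the annihilator of a $\mathfrak{p}_i$-primary submodule is spanned by $pg$-sequences supported at the single geometric point $\omega(i)$, with control over which polynomial weights $h$ and which coefficient vectors $a$ can occur. This requires translating the local algebraic structure of $Q_i$ at $\mathfrak{p}_i$ (the symbolic powers and the embedded components in the vectorial setting) into the differential/polynomial structure of the dual $pg$-sequences, and verifying that the inversion $\omega \mapsto \omega^{-1}$ and the coordinate framing $h(z)\in\bC[z]\otimes\bC^{dn}$ are handled consistently. I would isolate this as a lemma proved by reducing to a monomial-basis or Macaulay-inverse-system computation at $\mathfrak{p}_i$, and then assemble the global statement by the dual correspondence between intersection of modules and closed sum of their annihilators, using the density already provided by Theorem \ref{t:vectorialsynthesis} to upgrade the algebraic sum to its closure.
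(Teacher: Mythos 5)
Your proposal is correct and follows essentially the same route as the paper: the paper's own proof is a two-line reduction, identifying $\F(\C;\bC)$ as the dual (annihilator) of $M(\C)^*$ under the pairing of Section \ref{ss:duality} and then citing Theorem \ref{t:vectorialsynthesis} \emph{and its proof}, which already contains the Lasker--Noether decomposition, the root sequence, and the primary-case matching you flag as the main obstacle (handled there via Proposition \ref{p:bigideal}, the Krull intersection theorem and Lemma \ref{l:mainlemma}). Your elaboration of that step is a faithful unpacking of the cited machinery rather than a different argument.
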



It is possible, although unusual, for a crystal framework to have a finite-dimensional first-order flex space which is strictly larger than the finite-dimensional space of rigid motion flexes, and we give some examples below. The finiteness of the geometric spectrum is a simple necessary condition for this phenomenon and we shall show, from the primary decomposition structure of $M(\C)^*$, that it is also a sufficient condition.

\begin{thm}\label{t:findimflex}  Let $\C$ be a crystal framework in  $\bR^d$ with a given periodic structure and associated geometric flex spectrum $\Gamma(\C)$. Then the following statements are equivalent.

(i) $\F(\C;\bC)$ is finite-dimensional.

(ii) $\Gamma(\C)$ is a finite set.

\end{thm}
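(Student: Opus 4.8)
The plan is to establish the two implications separately, with (i) $\Rightarrow$ (ii) being elementary and (ii) $\Rightarrow$ (i) resting on the Lasker--Noether structure of $M(\C)^*$ supplied by Theorem \ref{t:flexthm}. Throughout I would use that the support of the quotient $\bC[z]^{dn}/M(\C)^*$ at a point $\zeta$ is nonzero exactly when the rows $p_i(\zeta)$ fail to span $\bC^{dn}$, that is, when $\rank \Psi_\C(\zeta) < dn$; since $\rank$ simultaneously controls row span and right nullspace, this rank-deficiency locus in $\bC_*^d$ is precisely the image of $\Gamma(\C)$ under the bijection $\omega \mapsto \omega^{-1}$.

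For (i) $\Rightarrow$ (ii) I would argue the contrapositive. Assume $\Gamma(\C)$ is infinite. For each $\omega \in \Gamma(\C)$ the kernel of $\Psi_\C(\omega^{-1})$ is nonzero, so there is a nonzero $a_\omega$ in that kernel and, as recorded in Section \ref{s:fieldsandrigidity}, the factor-periodic field $e_\omega \otimes a_\omega : k \mapsto \omega^k a_\omega$ lies in $\F(\C;\bC)$. The geometric multi-sequences $e_\omega$ for distinct $\omega \in \bC_*^d$ are linearly independent (Artin independence of the characters of $\bZ^d$, equivalently a multivariate Vandermonde argument), so the flexes $e_\omega \otimes a_\omega$ indexed by the infinite set $\Gamma(\C)$ are linearly independent and $\F(\C;\bC)$ is infinite-dimensional.

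For (ii) $\Rightarrow$ (i) I would invoke Theorem \ref{t:flexthm}. A root sequence $\omega(1),\dots,\omega(s)$ for the Lasker--Noether decomposition of $M(\C)^*$ is already finite, and Theorem \ref{t:flexthm} identifies $\F(\C;\bC)$ with the closed linear span of the $pg$-flexes $u_{\omega(i),h}$ carrying these geometric indices. The task therefore reduces to bounding, for each fixed $\omega(i)$, the polynomial weights $h$ that produce genuine flexes. By the previous paragraph, finiteness of $\Gamma(\C)$ makes the support of $\bC[z]^{dn}/M(\C)^*$ finite, so every associated prime of the decomposition is a maximal ideal $\mathfrak{m}_{\omega(i)}$ and every primary component is $\mathfrak{m}_{\omega(i)}$-primary. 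Such a component contains a power $\mathfrak{m}_{\omega(i)}^{N_i}$ and hence has finite colength; dually, only a finite-dimensional space of weights $h$, with degrees bounded in terms of $N_i$, can yield a flex with index $\omega(i)$. Summing these finitely many finite-dimensional contributions shows $\F(\C;\bC)$ is finite-dimensional, and a finite-dimensional subspace is automatically closed in the topology of pointwise convergence.

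The main obstacle is the second implication, and at its core is the degree bound on the weights $h$. This is the commutative-algebra fact that a finitely generated $\bC[z]$-module with finite, i.e.\ zero-dimensional, support has finite length and therefore finite $\bC$-dimension, each residue field being $\bC$; transported through the module--sequence duality of Section \ref{s:lefranc} it caps the polynomial degrees appearing in the spectral synthesis of $\F(\C;\bC)$. Once each primary component is known to be $\mathfrak{m}_{\omega(i)}$-primary, the Artinian finite-length conclusion and its dualization are routine, so the decisive step is the identification of the decomposition's support with the inverse image of $\Gamma(\C)$ and the consequent passage from a finite variety to maximal associated primes.
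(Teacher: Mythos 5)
Your proposal is correct and follows essentially the same route as the paper: linear independence of factor-periodic flexes with distinct multi-factors for (i) $\Rightarrow$ (ii), and for (ii) $\Rightarrow$ (i) the Lasker--Noether decomposition from Theorem \ref{t:flexthm}, with each primary component being primary for a maximal ideal $\mathfrak{m}_{\omega(i)}$ and hence of finite colength, so that its annihilator of $pg$-flexes is finite-dimensional. You merely make explicit what the paper compresses into ``it follows that the annihilator of each $Q_i^*$ is finite-dimensional''; the only point to keep in mind is that the identification of the support with $\Gamma(\C)$ under inversion is valid on $\bC_*^d$, and it is the paper's Nullstellensatz lemma that guarantees the associated primes have roots there.
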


\begin{rem}\label{r:unbounded}
In view of the unbounded nature of a $pg$-flex with nonunimodular geometric multi-factor it might appear that these results 
have little relevance to materials science. This is definitely not the case however since \emph{surface modes}, associated with a hyperplane boundary wall for example, arise as bounded restrictions of unbounded flexes of the bulk crystal. See for example  Lubensky et al \cite{lub-et-al}, Power \cite{pow-seville},  Rocklin et al \cite{roc-et-al} and Sun et al \cite{sun-et-al}. Thus
the geometric spectrum in effect identifies free surfaces where one may find surface modes with geometric decay into the bulk, and the  $\bC[z]$-module $M(\C)^*$ and its annihilator provides further information.


The commutative algebra viewpoint also usefully extends the conceptual analysis of crystal frameworks which, for example, may now be described as  \emph{primary} or \emph{properly decomposable} according to whether these properties hold for the $\bC[z]$-module $M(\C)^*$ associated with a primitive periodic structure.
\end{rem}

 In the rest of this section we recall the Lasker-Noether theorem, we give some simple examples to illustrate the main results and some steps of the proofs, and we discuss primary ideals in formal power series rings.

\subsection{The primary decomposition of modules for Noetherian rings}\label{ss:noetherian}
The Lasker-Noether theorem states that every submodule of a finitely generated module over a Noetherian ring is a finite intersection of {primary} submodules. 


\begin{defn}Let $R$ be a Noetherian ring, let $L$ be a submodule of an $R$-module $N$, and for $p\in R$, let $\lambda_p:N/L \to N/L$ be multiplication by $p$. Then $L$ is a \emph{primary submodule} of $N$ if $L$ is proper and for every $p$ the map $\lambda_p$ is either injective or nilpotent. 
If $P=\{p\in R: \lambda_p \mbox{   is nilpotent}\}$ then $P$ is a prime ideal and  $L$ is said to be a \emph{$P$-primary submodule} of $N$.
\end{defn}





\begin{defn}\label{d:roots}
Let $M = Q_1\cap \dots \cap Q_s$ be a primary decomposition of the $\bC[z]$-module $M$ where $Q_i$ is $P_i$-primary for distinct primes $P_i, 1\leq i\leq s$.
 A \emph{root sequence} for $M$ is a set $\omega(1), \dots ,\omega(s)$ of points in $\bC^d$ where for each $1 \leq i \leq s$ the point $\omega(i) $ is a \emph{root} of $P_i$ in the sense that $p(\omega(i))=0$ for all $p(z)$ in $P_i$.
\end{defn}

{ For more details and discussion see  Ash  \cite{ash}, as well as Atiyah and MacDonald \cite{ati-mac}, Krull \cite{kru} and Rotman \cite{Rot10}. 
In particular (Chapter 1 of \cite{ash}) a strong form of the Lasker-Noether theorem asserts that every finitely generated submodule $M$ of a Noetherian ring has a decomposition as given in Definition \ref{d:roots}, and this is called a \emph{primary decomposition}. {Moreover any  such decomposition leads to a \emph{reduced} primary decomposition with distinct primes ideals $P_i$, and this set of prime ideals  is uniquely determined by $M$.} 



\subsection{Examples and remarks}\label{s:examples} Consider first the simplest possible connected crystal framework in two dimensions, namely the 2D grid framework $\C_{\bZ^2}$, whose joints lie on the integer lattice. We show that there is a set of vectorial $pg$-sequences with dense span in the flex space. 

For a suitable choice of single joint motif, the transfer function $\Psi(z)$ has 2 row vector functions, $p_1(z_1,z_2) = (1-z_1,0), p_2(z_1,z_2) = (0, 1-z_2)$. The corresponding $\bC[z]$-module in $\bC[z]\otimes \bC^2$ is
\[
M^*= \bC[z]p_1(z)+\bC[z]p_2(z) =  (\bC[z](1-z_1), \bC[z](1-z_2)).
\]
Consider $Q_1^*=(\bC[z](1-z_1), \bC[z])$ and $Q_2^*= (\bC[z], \bC[z](1-z_2))$. Then $ M^*= Q_1^*\cap Q_2^* $. Moreover $M^*$ is a submodule of $N=\bC[z]\otimes \bC^2$ and $N/Q_1^*$ is module-isomorphic to $\bC[z]/(1-z_1)\bC[z]$. Thus, for $p(z) \in \bC[z]$ the map $\lambda_p$ is injective if  $(1-z_1)$ is not a factor of $p(z)$ and is zero otherwise. Thus $Q_1^*$, and similarly $Q_2^*$, are primary submodules of $N$. {Also the ideals $P_1=(1-z_1)\bC[z]$  and $P_2=(1-z_2)\bC[z]$) are prime ideals in  $\bC[z]$, and in fact they are associated prime ideals in  $\bC[z]$ for $Q_1^*$ and $Q_2^*$, respectively,  and $Q_i$ is $P_i$-primary.}

We now see that a root sequence $\{\omega(1), \dots ,\omega(s)\}$, for  
$M^*$, can be any pair $\{(1,\xi_2), (\xi_1,1)\}$ with $\xi_1, \xi_2$ in $\bC_*$. 
Let us take $\omega(1)= (1,1), \omega(2)= (1,1)$. Theorem \ref{t:flexthm} predicts that there is a set of infinitesimal flexes of the form
\[
k\to (h_1(k), h_2(k)),\quad  h_1(z), h_2(z)\in \bC[z],
\]
whose closed linear span is $\F(\C_{\bZ^2};\bR)$.
To see, independently, that this is true consider first the polynomials
$h_1(z)$ of the form $h_1(z_1, z_2) = h(z_2)$, with $h(z)$ a single variable polynomial. The velocity field
\[
u: k \to (h_1(k),0) 
\]
is a velocity field which gives a constant horizontal velocity to the joints on each horizontal line. These are infinitesimal flexes.
Moreover it is straightforward to show by direct arguments that the closed span of these flexes give the space of \emph{all} infinitesimal flexes with this horizontal constancy property, including, in particular, the localised translational flexes which are supported on a single horizontal line of joints. We remark that these localised translational flexes are evidently not in the (unclosed) linear span of vectorial $pg$-sequences.

Exchanging the roles of the variables it follows similarly that there are vectorial $pg$-flexes whose closed linear span contains the vertically localised flexes. The closed span of the vertically localised flexes and the horizontally localised flexes is the space of all flexes, since one can show that every infinitesimal flex is an \emph{infinite} linear sum of the line-localised flexes. Thus the conclusion of the theorem is confirmed for $\C_{\bZ^2}$.

Another favoured crystal framework example in $\bR^2$ is the \emph{kagome framework}, $\C_{\rm kag}$, which is associated with the regular hexagonal tiling of the plane. There are $3$ joints and $6$ bars in a primitive motif and the $\bC[z]$-module $M(\C_{\rm kag})^*$ in $\bC[z]\otimes \bC^6$ has decomposition length $s=3$. A direct verification of the conclusion of Theorem \ref{t:flexthm} may be obtained as above by exploiting the fact that the  
line-localised  flexes form a generalised basis for the flex space \cite{bad-kit-pow-2}.

A simple 2-dimensional crystal framework which illustrates Theorem \ref{t:findimflex} may be obtained from $\C_{\bZ^2}$ by adding the diagonal bars $(n,m)(n+1,m+1)$, for $n+m$  even. This is Example 3 from the gallery of examples in Badri, Kitson and Power \cite{bad-kit-pow}. An explicit primitive motif consists of 2 joints and 5 bars and the RUM spectrum and the geometric spectrum are equal to the set $\{(1,1),(-1,1)\}$. One can verify directly that the first-order flex space is $4$-dimensional and is spanned by a basis for the rigid motion flexes together with a single geometric flex, with $\omega = (-1,1)$, that restricts to alternating rotational flexes of each diagonalised square subframework. 

For a simple $3$-dimensional illustration, with $\Gamma(\C)=\{\ul{1}\}$, one may take an infinitesimally rigid crystal framework and attach a parallel copy (with the same period vectors) by means of parallel bars between corresponding joints. In this case the first-order flex space has dimension $8$. More elaborate (connected) examples of the same flavour may be obtained from (disconnected) entangled frameworks \cite{car-cia-pro2003}
by the periodic addition of connecting bars.

\begin{rem}\label{r:APflexes}
While the pure geometric flexes alone need not have dense span in the flex space they may nevertheless be sufficient for restricted classes of first-order flexes with respect to other closure topologies. This has been shown to be the case for the space of uniformly almost periodic flexes  \cite{bad-kit-pow}. It would be of interest to develop further such analytic spectral synthesis and to find spectral integral representations for other classes of flex spaces.
\end{rem}

\begin{rem}\label{r:bases} The existence of generalised bases of {localised geometric flexes} for a crystal framework is considered in Badri, Kitson and Power \cite{bad-kit-pow-2}. It seems, as in the case of the kagome framework for example, that such bases give the best way of understanding the first-order flex space and rigid unit modes in that every such flex is an infinite linear combination of basis elements. However such crystal flex bases need not exist and the considerations in \cite{bad-kit-pow-2} suggest that this is typical unless the geometric spectrum has sufficient linear structure. 


\end{rem}

\begin{rem}\label{r:ulatrarigid}
One can also consider forms of rigidity, which one might call persistent rigidity, with respect to \emph{all} periodic structures, both in the strict (fixed lattice) case and the flexible lattice case. The latter form is known as \emph{ultrarigidity} (see Malestein and Theran \cite{mal-the-2}) while the former form we refer to as \emph{persistent periodic rigidity}. Each may be defined in terms of the vector space of velocity fields which is the union over all periodic structures of the appropriate spaces of periodic velocity fields. These rigidity notions are weaker than strict periodic infinitesimal rigidity but stronger than first-order rigidity.

For a periodic structure for $\C$ define the \emph{rational RUM spectrum} $\Omega_{\rm rat}(\C)$ 
to be the intersection of $\Omega(\C)$ with the 
points in $\bT^d$
whose arguments are rational multiples of $2\pi$.
Then it can be shown that a crystal framework $\C$ is persistently periodically rigid if and only if the matrix values of the transfer function  on the subset $\Omega_{\rm rat}(\C)$ have extremal rank.
The analogous characterisation for ultrarigidity, together with detailed algorithmic considerations, is given in \cite{mal-the-2}.
\end{rem}

\subsection{Primary ideals in $\bC[[z]]$}
In this section we show that a primary $\bC[z]$-module in $\bC[z]\otimes \bC^r$ with root $0$ may be recovered from the  $\bC[[z]]$-module that it generates in $\bC[[z]]\otimes \bC^r$, where $\bC[[z]]$ is the ring of formal power series in $z_1, \dots ,z_n$. This connection plays a key role in our main proof, as we discuss in Section \ref{ss:duality}. Since we have not found a satisfactory reference we give the details of this connection  in Proposition \ref{p:bigideal} and its proof.

Write $\bC[z]_{(z)}$ for the ring of rational functions in $z$ that are continuous on some neighbourhood of $0$. (The notation 
reflects the fact that if $(z)$ is the ideal in $\bC[z]$ generated by $z_1,...,z_d$ then the set $S=\bC[z]\backslash (z)$ is multiplicative and $\bC[z]_{(z)}$ is the localization $S^{-1}\bC[z]$.) Since $(z)$ is maximal and therefore prime, $\bC[z]_{(z)}$ is a Noetherian local ring with unique maximal ideal $m_{(z)}=(z)\bC[z]_{(z)}$. 
We also write $\bC[[z]]$ for the formal power series ring which is also a Noetherian local ring, with unique maximal ideal $m_{[z]}=(z)\bC[[z]]$.  Thus we have the natural ring inclusions
\[
\bC[z] \subset \bC[z]_{(z)} \subset  \bC[[z]].
\]
That these rings are Noetherian is discussed in Atiyah and MacDonald \cite{ati-mac}, for example.

Let $Q$ be a finitely generated submodule of $\bC[z]\otimes \bC^r$, let
$R[Q]:=\bC[z]_{(z)}\cdot Q$ be the corresponding
$\bC[z]_{(z)}$-module in $\bC[z]_{(z)}\otimes \bC^r$, and let $S[Q]= \bC[[z]]\cdot Q$ be the corresponding $\bC[[z]]$-module in $\bC[[z]]\otimes \bC^r$.

\begin{prop}\label{p:bigideal}
Let $Q$ be a primary submodule in $\bC[z]\otimes \bC^r$ with associated root $0$. Then $Q= S[Q]\cap (\bC[z]\otimes \bC^r)$.
\end{prop}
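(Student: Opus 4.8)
The inclusion $Q \subseteq S[Q]\cap(\bC[z]\otimes\bC^r)$ is immediate from $Q\subseteq S[Q]$ and $Q\subseteq \bC[z]\otimes\bC^r$, so the content is the reverse inclusion. The plan is to factor the passage from $\bC[z]$ to $\bC[[z]]$ through the intermediate local ring $\bC[z]_{(z)}$, treating the two steps by different means: the completion step $\bC[z]_{(z)}\subset\bC[[z]]$ is handled by faithful flatness, while the localization step $\bC[z]\subset\bC[z]_{(z)}$ is exactly where the primary hypothesis and the root condition enter. Writing $N=\bC[z]\otimes\bC^r$, and recalling the containments $N\subseteq \bC[z]_{(z)}\otimes\bC^r\subseteq\bC[[z]]\otimes\bC^r$, I would establish the two identities $S[Q]\cap(\bC[z]_{(z)}\otimes\bC^r)=R[Q]$ and $R[Q]\cap N=Q$, and then chain them, using $N\subseteq\bC[z]_{(z)}\otimes\bC^r$, to conclude $S[Q]\cap N=R[Q]\cap N=Q$.

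For the completion step, first note that every $s\in S:=\bC[z]\setminus(z)$ has nonzero value at $0$, hence nonzero constant term, and is therefore a unit in $\bC[[z]]$; consequently $S[Q]=\bC[[z]]\cdot Q=\bC[[z]]\cdot R[Q]$, so $S[Q]$ is the base change of $R[Q]$ along $\bC[z]_{(z)}\to\bC[[z]]$. Since $\bC[[z]]$ is the completion of the Noetherian local ring $\bC[z]_{(z)}$ at its maximal ideal, it is faithfully flat over it. I would then invoke the standard descent of intersections: for a faithfully flat extension $A\to B$ and a submodule $L\subseteq M$ of $A$-modules, $(B\otimes_A L)\cap M=L$ inside $B\otimes_A M$ (this follows by tensoring $0\to L\to M\to M/L\to 0$ with the flat $B$ and using that $M/L\to B\otimes_A(M/L)$ is injective). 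Applied to $A=\bC[z]_{(z)}$, $M=\bC[z]_{(z)}\otimes\bC^r$ and $L=R[Q]$, this yields $S[Q]\cap(\bC[z]_{(z)}\otimes\bC^r)=R[Q]$.

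For the localization step, I would use the general description of the contraction of a localized submodule: $R[Q]\cap N=S^{-1}Q\cap N=\{f\in N: sf\in Q\text{ for some }s\in S\}$, the $S$-saturation of $Q$ in $N$. This equals $Q$ precisely when no element of $S$ is a zerodivisor on $N/Q$. Here the primary hypothesis is decisive: because $Q$ is $P$-primary, an element $p\in\bC[z]$ acts on $N/Q$ either injectively or nilpotently, so the zerodivisors on $N/Q$ are exactly the elements of $P$. The assumption that $Q$ has associated root $0$ means that every element of $P$ vanishes at $0$, i.e.\ $P\subseteq(z)$, whence $S\cap P=(\bC[z]\setminus(z))\cap P=\emptyset$. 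Thus no $s\in S$ is a zerodivisor on $N/Q$, the saturation collapses to $Q$, and $R[Q]\cap N=Q$.

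The main obstacle I anticipate is the descent of intersections through the completion: one must verify that $S[Q]$ genuinely arises as $\bC[[z]]\otimes_{\bC[z]_{(z)}}R[Q]$ (this is where unit-invertibility of $S$ in $\bC[[z]]$ is used) and that the faithful flatness of $\bC[[z]]$ over $\bC[z]_{(z)}$ is available, so that the injectivity of $M/L\to B\otimes_A(M/L)$ can be exploited. By contrast the localization step is comparatively soft once the equivalence between zerodivisors on $N/Q$ and membership in $P$ is recorded, and the root condition is used only through the single containment $P\subseteq(z)$. Note also that $\bC[z]\to\bC[z]_{(z)}$ is flat but not faithfully flat, which is exactly why the localization step cannot be absorbed into a one-step flatness argument and genuinely requires the primary and root hypotheses.
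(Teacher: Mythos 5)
Your proof is correct, and its skeleton --- factoring the extension $\bC[z]\subset\bC[[z]]$ through the local ring $\bC[z]_{(z)}$, and isolating the localization step as the one place where the primary hypothesis and the root condition $P\subseteq(z)$ enter --- is exactly the paper's. Your localization step is the paper's argument in different clothing: the paper takes $h=\sum (p_i/q_i)f_i$ in $R[Q]\cap(\bC[z]\otimes\bC^r)$, observes $(\prod q_i)h\in Q$ with $\prod q_i$ nonvanishing at the origin, and concludes $h\in Q$ because $\lambda_{\prod q_i}$ cannot be nilpotent and hence is injective; you phrase the same thing as the $S$-saturation of $Q$ collapsing because the zerodivisors on $N/Q$ are exactly $P$ and $S\cap P=\emptyset$. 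The genuine divergence is in the completion step. The paper proves $S[Q]\cap(\bC[z]_{(z)}\otimes\bC^r)=R[Q]$ by hand: truncate each power-series coefficient at total degree $M$, deduce that any element of the intersection lies in $\bigcap_{M}(R[Q]+m_{(z)}^M\otimes\bC^r)$, and kill the tail with Krull's intersection theorem (Theorem \ref{t:krullintersection}) applied to the quotient module. You instead recognize $S[Q]$ as the base change of $R[Q]$ along the completion $\bC[z]_{(z)}\to\bC[[z]]$ (using that elements of $S$ are units in $\bC[[z]]$) and invoke descent of submodules under faithfully flat maps. Both are valid and neither step needs the primary hypothesis, consistent with the paper's preliminary lemma being stated for arbitrary modules. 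Your route is shorter and more conceptual, but it outsources the hard content to the faithful flatness of completion, which is itself an Artin--Rees/Krull-type fact; the paper's version is self-contained modulo the bare statement of Krull's theorem. If you write yours up in full, the two points to make explicit are that flatness gives the injectivity needed to identify $\bC[[z]]\otimes_{\bC[z]_{(z)}}R[Q]$ with its image $S[Q]$, and that faithfulness gives the injectivity of $M/L\to B\otimes_A(M/L)$ --- you flag both, so this is presentational rather than a gap.
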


For the proof we use a preliminary lemma which depends on the following  Krull intersection theorem 
\cite{ash}, \cite{ati-mac}.

\begin{thm}\label{t:krullintersection}
Let $R$ be a Noetherian local ring with maximal ideal $m$ and let $N$ be a finitely generated $R$-module. Then $\bigcap_{n=1}^\infty m^nN=\{0\}$.
\end{thm}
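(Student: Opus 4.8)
The plan is to set $E := \bigcap_{n=1}^\infty m^n N$ and to prove the single identity $mE = E$; once this is established the conclusion $E = \{0\}$ follows at once from Nakayama's lemma, since $R$ is local (so $m$ is the Jacobson radical) and $E$ is finitely generated. Finite generation of $E$ is automatic: $R$ is Noetherian and $N$ is finitely generated, hence $N$ is a Noetherian module and every submodule, $E$ included, is finitely generated. (If $mE = N$ then $E = N$ and $mN = N$, so Nakayama already forces $N = E = \{0\}$; thus we may assume $mE$ is a proper submodule.)

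The inclusion $mE \subseteq E$ is immediate, so the whole content lies in the reverse inclusion $E \subseteq mE$, and here I would argue through the primary decomposition already available in the paper. Since $N$ is Noetherian, the proper submodule $mE$ admits a reduced primary decomposition $mE = Q_1 \cap \dots \cap Q_s$, with each $Q_i$ a $P_i$-primary submodule of $N$. I would show $E \subseteq Q_i$ for every $i$ and then intersect. Because $R$ is local, each prime $P_i$ is contained in the unique maximal ideal $m$, so exactly two cases occur.

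If $P_i = m$, then each generator $g$ of the finitely generated ideal $m$ lies in $P_i$, so by the definition of $P_i$-primarity the map $\lambda_g$ on $N/Q_i$ is nilpotent, i.e. $g^{\,k_g} N \subseteq Q_i$; choosing $k$ large enough gives $m^k N \subseteq Q_i$, and since $E \subseteq m^k N$ by definition we obtain $E \subseteq Q_i$. If instead $P_i \subsetneq m$, choose $a \in m \setminus P_i$. Then $\lambda_a$ on $N/Q_i$ is not nilpotent, so by the primary dichotomy it is injective. For any $e \in E$ we have $ae \in mE \subseteq Q_i$, hence the image of $e$ in $N/Q_i$ is killed by the injective map $\lambda_a$ and is therefore zero, giving $e \in Q_i$. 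In either case $E \subseteq Q_i$, so $E \subseteq \bigcap_i Q_i = mE$, and consequently $E = mE$.

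The step I expect to carry the weight is the reverse inclusion via this case split, and in particular the clean use of primarity in the second case: the fact that an element outside the associated prime acts injectively modulo a primary submodule is precisely the injective-or-nilpotent dichotomy recorded in the definition of primary submodule earlier in the paper, and it is exactly what converts ``$ae \in Q_i$'' into ``$e \in Q_i$.'' Everything else---finite generation of $E$, the passage from $P_i = m$ to $m^k N \subseteq Q_i$, and the concluding appeal to Nakayama---is routine Noetherian bookkeeping. An alternative route through the Artin--Rees lemma would also yield $mE = E$ directly, but the primary-decomposition argument has the advantage of being self-contained within the Lasker--Noether toolkit the paper has already set up.
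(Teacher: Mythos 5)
Your proof is correct, and the comparison here is slightly unusual: the paper does not prove Theorem~\ref{t:krullintersection} at all, but quotes it as a known result with citations to Ash \cite{ash} and Atiyah--MacDonald \cite{ati-mac}, its only role being to feed the lemma identifying $R[Q]$ with $S[Q]\cap(\bC[z]_{(z)}\otimes \bC^r)$. Your argument---set $E=\bigcap_{n\geq 1} m^nN$, reduce to showing $E\subseteq mE$ via a reduced primary decomposition $mE=Q_1\cap\dots\cap Q_s$ in $N$, split on whether $P_i=m$ or $P_i\subsetneq m$ (the dichotomy is exhaustive precisely because $m$ is the unique maximal ideal of the local ring), and finish with Nakayama---is essentially Krull's classical proof, and it is the route most consonant with the Lasker--Noether machinery the paper has already set up: case two uses nothing beyond the injective-or-nilpotent dichotomy in the paper's own definition of a $P$-primary submodule, and case one correctly uses finite generation of $m$ (Noetherianity) to promote nilpotence of each $\lambda_g$ on $N/Q_i$ to $m^kN\subseteq Q_i$ for large $k$. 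The standard alternative, which is the proof in Atiyah--MacDonald, obtains $E=mE$ instead from the Artin--Rees lemma, via $E = E\cap m^nN = m^{n-c}(E\cap m^cN)\subseteq mE$ for $n>c$; that route generalises immediately to an arbitrary ideal $I$ in place of $m$ (yielding $\bigcap_n I^nN = \{x\in N : (1+a)x=0 \mbox{ for some } a\in I\}$), whereas yours stays entirely inside the primary-decomposition toolkit and needs only Nakayama as outside input. Your bookkeeping is also in order: $E$ is finitely generated since $N$ is Noetherian, the degenerate case $mE=N$ is dispatched before invoking the decomposition (which requires a proper submodule), and the inclusion $E\subseteq m^kN$ used in case one holds by the very definition of $E$.
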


\begin{lem} Let $Q$ be a $\bC[z]$-module in $\bC[z]_{(z)}\otimes \bC^r$. Then 
$R[Q]=S[Q]\cap (\bC[z]_{(z)}\otimes \bC^r)$.
\end{lem}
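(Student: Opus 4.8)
The plan is to prove the two inclusions of the asserted identity separately, with all of the genuine content concentrated in the reverse inclusion, which I would deduce from the Krull intersection theorem (Theorem~\ref{t:krullintersection}). Throughout I would write $A = \bC[z]_{(z)}$, with maximal ideal $m = m_{(z)}$, $\hat A = \bC[[z]]$, $M = \bC[z]_{(z)}\otimes \bC^r = A^r$, and $N = R[Q] = A\cdot Q$. Since $A$ is Noetherian and $M$ is finitely generated, $N$ is a finitely generated submodule of $M$, and because $A \subseteq \hat A$ one has $S[Q] = \hat A \cdot Q = \hat A\cdot N$. The forward inclusion $R[Q]\subseteq S[Q]\cap(\bC[z]_{(z)}\otimes \bC^r)$ is then immediate: every element of $N = AQ$ lies in $\hat A Q = S[Q]$ and, by construction, in $M$.

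For the reverse inclusion $S[Q]\cap M \subseteq R[Q]$ I would exploit the compatibility of the $m$-adic filtrations on $M$ and on its completion. View $M = A^r$ inside $\hat M = \hat A^r$, and for each $k\geq 1$ consider the reduction map $\pi_k : \hat M \to \hat M/\hat m^k\hat M$. The point I would emphasise is the elementary but essential fact that completion does not alter the successive quotients, so that $\hat A/\hat m^k \cong A/m^k$ and hence $\hat M/\hat m^k \hat M \cong M/m^k M$, compatibly with $\pi_k$ restricted to $M$. Under $\pi_k$ the module $S[Q] = \hat A N$ is carried onto $(A/m^k)\cdot \pi_k(N)$, which is exactly the image $(N + m^k M)/m^k M$ of $N$. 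Consequently any $x \in S[Q]\cap M$ satisfies $\pi_k(x)\in (N+m^k M)/m^k M$, that is, $x \in N + m^k M$, and this holds for every $k$.

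It then remains to identify $\bigcap_{k\geq 1}(N + m^k M)$ with $N$. Passing to the finitely generated $A$-module $M/N$, the image of this intersection is $\bigcap_{k} m^k(M/N)$, which vanishes by the Krull intersection theorem (Theorem~\ref{t:krullintersection}) applied to $M/N$ over the Noetherian local ring $A$. Hence $\bigcap_k (N+m^kM) = N$, so $x \in N = R[Q]$, and the reverse inclusion follows.

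The main obstacle I anticipate is not deep but requires care: it is the step transferring membership in the completed module $S[Q]$ down to each finite truncation, i.e. verifying that $\pi_k$ sends $\hat A N$ precisely onto the image of $N$ so that $x \in S[Q]\cap M$ forces $x \in N + m^k M$ for all $k$. Once this bookkeeping is in place, the conclusion is a direct application of Theorem~\ref{t:krullintersection}, and I would take care to note only that $M/N$ is finitely generated as a quotient of $A^r$, so that the hypotheses of the Krull intersection theorem are genuinely met.
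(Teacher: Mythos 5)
Your proof is correct and is essentially the paper's argument in slightly different dress: both reduce the nontrivial inclusion to showing that an element of $S[Q]\cap(\bC[z]_{(z)}\otimes\bC^{r})$ lies in $R[Q]+m_{(z)}^{k}\otimes\bC^{r}$ for every $k$ --- the paper by truncating the power-series coefficients at total degree $k$, you via the identification of $M/m^{k}M$ with the corresponding quotient of the completion, which is the same fact in different clothing --- and both then conclude with the Krull intersection theorem applied to the finitely generated quotient module. There is no substantive difference in approach.
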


\begin{proof} The inclusion of $R[Q]$ in the intersection is elementary. On the other hand the intersection is equal to the set 
\[
\left\{ P= \sum_{i=1}^N g_if_i \in \bC[z]_{(z)}\otimes \bC^r: g_i\in \bC[[z]], f_i\in Q \right\}.
\]
Write $g_i = g_{i,0} +r_i$ where $g_{i,0}$ is the partial sum of the series for $g_i$ for terms of total degree less than $M$. Then
the element $P_0= \sum_i g_{i,0}f_i$ belongs to $R[Q]$. Also
the element $P_r= \sum_i r_if_i = P - \sum_i g_{i,0}f_i$ belongs to $\bC[z]_{(z)}\otimes \bC^r$. Observe that $P_r$ also belongs to $m_{[[z]]}^M\otimes \bC^r$ and so it belongs to  $m_{(z)}^M\otimes \bC^r$. Thus $P$ lies in the intersection 
\begin{equation}\label{e:intersection}
\bigcap_{M=0}^\infty (R[Q]+ m_{(z)}^M\otimes \bC^r).
\end{equation}
By the Krull intersection theorem 
\[
\bigcap_{M=0}^\infty m_{(z)}^M((\bC[z]_{(z)}\otimes \bC^r)/R[Q]) =\{0\}
\]
and so the intersection of (\ref{e:intersection})
is equal to $R[Q]$, and the lemma follows. 
\end{proof}

\begin{proof}[Proof of Lemma \ref{p:bigideal}] By the previous lemma it suffices to show $Q$ is equal to $R[Q]\cap (\bC[z]\otimes \bC^r)$, which is the set 
\[
\left\{h=\sum g_if_i \in 
\bC[z]\otimes \bC^r : g_i \in \bC[z]_{(z)}, f_i\in Q\right\}.
\]
Let $h$ belong to this set. Then
$h$ is equal to the finite sum $\sum_i\frac{p_i}{q_i}f_i = {\sum a_if_i}/{\prod q_i},$ where $p_i, q_i \in \bC[z]$ for all $i$. Thus
$\sum_i a_if_i = ({\prod q_i})h \in Q$. 

On the other hand, since $Q$ is a primary $\bC[z]$-module, the map 
\[
\lambda_{\prod q_i}:
(\bC[z]\otimes \bC^r)/Q \to (\bC[z]\otimes \bC^r)/Q
\]
is either nilpotent or injective. Since ${\prod q_i}$ does not vanish at the origin the map is not nilpotent and so it follows that $h\in Q$. 
\end{proof}

\section{Shift-invariant  subspaces of $C(\bZ^d; \bC^r)$}\label{s:lefranc}
Let $r\geq 1$ and let $C(\bZ^d; \bC^r)$ be the topological vector space of vector-valued functions  $u: \bZ^d \to \bC^r$ with the topology of coordinatewise convergence. Let $e_1, \dots ,e_d$ be the generators of $\bZ^d$ and let  $W_i, 1\leq i \leq d,$ be the forward shift operators, so that $(W_iu)(k) = u(k-e_i)$, for all $k$ and each $i$.
A subspace $A$ of $C(\bZ^d;\bC^r)$ is said to be an \emph{invariant subspace} if it is invariant for the shift operators and their inverses, or equivalently if $W_iA = A$ for each $i$.
In this section we obtain a spectral synthesis property for  closed shift-invariant subspaces of $C(\bZ^d; \bC^r)$.

\subsection{$\bC(z)$-modules and their reflexivity}\label{ss:duality} There is a bilinear pairing
$\langle p, u\rangle: \bC(z) \times C(\bZ^d) \to \bC$ such that, for $p(z) = \sum_k a_kz^k$ in  $\bC(z)$ and $u=(u_k)_{k\in \bZ^d}$ in $C(\bZ^d)$,
$
\langle p, u\rangle = \sum_k a_ku_k.
$
Similarly, considering $C(\bZ^d; \bC^r)$ as the space $C(\bZ^d)\otimes \bC^r$, for $p=(p_i)\in   \bC(z)\otimes \bC^r $ and $u=(u_i)\in C(\bZ^d)\otimes \bC^r$ we have the corresponding pairing
$\langle p, u\rangle: \bC(z)\otimes \bC^r \times C(\bZ^d)\otimes \bC^r \to \bC$, where
\[
\langle p, u \rangle= \langle (p_i), (u_i) \rangle = \sum_{i=1}^r \langle p_i, u_i\rangle.
\]

It is elementary to show that with this pairing
the vector space dual of
$C(\bZ^d)\otimes \bC^r$ can be identified with 
$\bC(z)\otimes \bC^r$. Also, with the same pairing
the dual space of the vector space $\bC(z)\otimes \bC^r$ is identified with $C(\bZ^d)\otimes \bC^r$.
Thus both spaces are reflexive, that is, equal to their double dual, in the category of vector spaces.
These dual space identifications also hold in the category of linear topological spaces when each is endowed with the topology of coordinatewise convergence, simply because all linear functionals are automatically continuous with these topologies. 


For a subspace $A$ of $C(\bZ^d)\otimes \bC^r$ we write ${B} = A^\perp$ for the annihilator in  $\bC(z)\otimes \bC^r$ with respect to the pairing. Thus 
\[
{B} = \{p \in \bC(z)\otimes \bC^r: \langle p, u\rangle =0, \mbox{ for all } u \in A\}.
\]
Similarly for a subspace ${B}$ of $\bC(z)\otimes \bC^r$ we write ${B}^\perp$ for the annihilator in $C(\bZ^d)\otimes \bC^r$ with respect to the same pairing.

\begin{lem}\label{l:doubleperp}
Let $A$ be a closed subspace of $C(\bZ^d)\otimes \bC^r$ and let $M$ be a closed subspace of $\bC(z)\otimes \bC^r$. Then $A = (A^\perp)^\perp$ and $M = (M^\perp)^\perp$.
\end{lem}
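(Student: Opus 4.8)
The plan is to prove the two bipolar (double-annihilator) identities $A = (A^\perp)^\perp$ and $M = (M^\perp)^\perp$ by the Hahn--Banach separation theorem, exploiting the reflexive duality already established between $C(\bZ^d)\otimes\bC^r$ and $\bC(z)\otimes\bC^r$ under the bilinear pairing, together with the fact noted above that every linear functional on either space is automatically continuous for the topology of coordinatewise convergence. The two statements are symmetric, so I would prove one identity in detail and remark that the other follows by interchanging the roles of the two spaces via the reflexive pairing; I treat $A=(A^\perp)^\perp$ first.

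\medskip

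First I would dispose of the easy inclusion. For any subspace $A$ one has $A\subseteq (A^\perp)^\perp$ directly from the definitions: if $u\in A$ then $\langle p,u\rangle=0$ for every $p\in A^\perp$, which is precisely the condition $u\in(A^\perp)^\perp$. This inclusion requires no closedness hypothesis. The content is the reverse inclusion $(A^\perp)^\perp\subseteq A$, and this is where closedness of $A$ enters. The strategy is the standard separation argument: suppose $u_0\notin A$; since $A$ is a closed subspace of the locally convex space $C(\bZ^d)\otimes\bC^r$ (the topology of coordinatewise convergence is locally convex, being a product topology of the scalar topologies), the Hahn--Banach separation theorem furnishes a continuous linear functional that vanishes on $A$ but is nonzero at $u_0$. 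By the reflexive duality such a functional is represented by some $p\in\bC(z)\otimes\bC^r$ via the pairing, so $\langle p,v\rangle=0$ for all $v\in A$ — that is $p\in A^\perp$ — while $\langle p,u_0\rangle\neq 0$, whence $u_0\notin(A^\perp)^\perp$. Contrapositively, $(A^\perp)^\perp\subseteq A$, and combined with the trivial inclusion this gives equality.

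\medskip

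The main obstacle, and the step to state carefully, is the justification that Hahn--Banach separation applies in this precise topological-vector-space setting and that the separating functional is representable through the pairing. Two points need attention. The first is local convexity: the space $C(\bZ^d)\otimes\bC^r$ carries the product topology coming from coordinatewise convergence, which is a locally convex topology generated by the seminorms given by absolute values of the individual coordinate evaluations, so the separation theorem for a point and a closed subspace applies. The second point, which does the real work, is that the dual space realised by Hahn--Banach is exactly $\bC(z)\otimes\bC^r$ under the pairing — and this is supplied by the reflexivity discussion immediately preceding the lemma, where it is observed that all linear functionals are continuous and that the two spaces are each other's dual. Because every separating functional automatically lies in the pairing dual, no functional escapes representation by an element $p$, and the annihilator $A^\perp$ captures all of them.

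\medskip

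For the second identity $M=(M^\perp)^\perp$ the argument is identical with the roles of the two spaces exchanged: $\bC(z)\otimes\bC^r$ is likewise locally convex for its coordinatewise topology, its dual is $C(\bZ^d)\otimes\bC^r$ via the same pairing, and closedness of $M$ lets Hahn--Banach separate any $p_0\notin M$ from $M$ by a functional represented by some $u\in C(\bZ^d)\otimes\bC^r$ with $u\in M^\perp$ and $\langle p_0,u\rangle\neq0$. I would write this as a one-line appeal to symmetry rather than repeating the separation argument. The only thing worth flagging is that both annihilators $A^\perp$ and $M^\perp$ are automatically closed subspaces (as intersections of kernels of continuous functionals), so the iterated operation lands in the category where the lemma's hypotheses are met, which keeps the symmetry clean.
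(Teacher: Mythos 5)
Your proposal is correct and follows essentially the same route as the paper, which proves the lemma in one line by citing the dual space identifications together with the Hahn--Banach theorem for topological vector spaces; you have simply written out the standard bipolar/separation argument that this citation encodes.
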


\begin{proof}This follows from the dual space identifications and from the Hahn-Banach theorem for topological vector spaces
 (\cite{Con90}, IV. 3.15).
\end{proof}

The following lemma provides a route for the analysis of shift-invariant subspaces $A$ in terms of the structure of their uniquely associated $\bC(z)$-modules $B=A^\perp$. Note that it follows from the Noetherian property that $\bC(z)$-modules in $\bC(\bZ^d)\otimes \bC^r$ are necessarily closed.

\begin{lem}\label{l:iffmodule}
A closed subspace $A$  in $C(\bZ^d)\otimes \bC^r$ is an invariant subspace if and only if $A^\perp$ is an $\bC(z)$-module of the {module} $\bC(z)\otimes \bC^r$. 
\end{lem}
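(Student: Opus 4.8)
The plan is to identify the multiplication operators on $\bC(z)$ as the adjoints of the shift operators $W_i$ under the given pairing, after which both implications reduce to routine bookkeeping. The first step is to record the fundamental adjoint identity: for $p \in \bC(z)\otimes \bC^r$ and $u \in C(\bZ^d)\otimes \bC^r$, and for each $i$,
\[
\langle z_i^{-1}p,\, u\rangle = \langle p,\, W_iu\rangle \quad\text{and}\quad \langle z_ip,\,u\rangle = \langle p,\, W_i^{-1}u\rangle .
\]
This is a direct reindexing computation: writing $p=\sum_k a_kz^k$, multiplication by $z_i^{-1}$ translates the coefficient multi-sequence $(a_k)$ by $e_i$ in exactly the manner that pairs against $W_i u$, since $(W_iu)(k)=u(k-e_i)$. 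It suffices to verify the scalar case $r=1$; the vector identity then follows componentwise from the definition $\langle p,u\rangle=\sum_i\langle p_i,u_i\rangle$ of the pairing on the tensor product.

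For the forward implication, suppose $A$ is invariant, so that $W_iA=A$, and in particular $W_iA\subseteq A$ and $W_i^{-1}A\subseteq A$ for each $i$. The annihilator $A^\perp$ is automatically a $\bC$-linear subspace, so to see that it is a $\bC(z)$-submodule it is enough to check closure under multiplication by the algebra generators $z_i$ and $z_i^{-1}$ (since these and their products span $\bC(z)$). Given $p\in A^\perp$ and $u\in A$, the adjoint identity gives $\langle z_ip,u\rangle=\langle p,W_i^{-1}u\rangle$, which vanishes because $W_i^{-1}u\in A$ and $p\in A^\perp$; hence $z_ip\in A^\perp$. The symmetric computation with $W_i$ in place of $W_i^{-1}$ shows $z_i^{-1}p\in A^\perp$, so $A^\perp$ is a $\bC(z)$-module.

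For the converse, suppose $B:=A^\perp$ is a $\bC(z)$-module. Since $A$ is closed, Lemma \ref{l:doubleperp} gives $A=(A^\perp)^\perp=B^\perp$, so it suffices to show $W_i^{\pm1}B^\perp\subseteq B^\perp$ for each $i$. Taking $u\in B^\perp$ and $p\in B$, the adjoint identity yields $\langle p,W_iu\rangle=\langle z_i^{-1}p,u\rangle$; because $B$ is a module, $z_i^{-1}p\in B$, so this pairing is zero, giving $W_iu\in B^\perp=A$. The analogous argument gives $W_i^{-1}u\in A$, and combining the two inclusions yields $W_iA=A$, as required.

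The only genuine subtlety, and the step I would be most careful about, is fixing the correct adjoint convention in the opening identity — in particular whether $z_i$ pairs against $W_i$ or against $W_i^{-1}$ — since a direction error there propagates confusingly through both implications. I would also note the asymmetric use of hypotheses: the forward direction needs only that $A^\perp$ is a $\bC$-subspace and that $\bC(z)$ is generated by the $z_i^{\pm1}$, requiring no topological input, whereas the converse invokes the reflexivity of Lemma \ref{l:doubleperp}, and hence the closedness of $A$, in an essential way.
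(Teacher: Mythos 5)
Your proof is correct and takes essentially the same route as the paper, whose entire argument is the single adjoint identity $\langle W_ia,b\rangle=\langle a,z_ib\rangle$ followed by ``the lemma follows''; you have simply supplied the routine details, including the (genuinely needed) appeal to Lemma \ref{l:doubleperp} in the converse direction. Your care over the adjoint convention is warranted: with the paper's definition $(W_iu)(k)=u(k-e_i)$ the correct identity is $\langle p,W_iu\rangle=\langle z_i^{-1}p,u\rangle$ as you state, though the discrepancy is harmless since the module is closed under both $z_i$ and $z_i^{-1}$.
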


\begin{proof}
For all $a \in A$, $b \in B=A^\perp$ and $1\leq i\leq d$ we have $\langle W_ia,b\rangle  = \langle a,z_ib\rangle$ and the lemma follows.
\end{proof}

\subsection{Primary decompositions of Noetherian modules} The Lasker-Noether theorem for a nonzero finitely generated module $M$ over a Noetherian ring $R$ ensures that $M$ is an intersection of a finite sequence of primary modules, $Q_1, \dots ,Q_s$, { where $Q_i$ is $P_i$-primary for distinct prime ideals $P_, \dots ,P_s$.}
In particular this decomposition applies to  the crystal framework  module $M(\C)^*$ over the Noetherian polynomial ring $\bC[z]$. The next lemma shows that it is also applicable to the $\bC(z)$-module $M(\C)$.

\begin{lem} The Laurent polynomial ring $\bC(z)$ is a Noetherian ring.
\end{lem}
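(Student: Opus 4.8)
The plan is to exhibit $\bC(z)$ as a localization of the ordinary polynomial ring $\bC[z] = \bC[z_1,\dots,z_d]$ and then to invoke two standard facts: that $\bC[z]$ is Noetherian, and that any localization of a Noetherian ring is again Noetherian. This reduces the lemma to machinery that the paper already relies upon elsewhere (the Hilbert Basis Theorem and the elementary theory of localizations).

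First I would recall that $\bC[z]$ is Noetherian by the Hilbert Basis Theorem, applied $d$ times starting from the field $\bC$. Next I would identify the Laurent polynomial ring as $\bC(z) = S^{-1}\bC[z]$, where $S = \{(z_1\cdots z_d)^n : n\geq 0\}$ is the multiplicative set generated by the single element $z_1\cdots z_d$. The key observation is that inverting $z_1\cdots z_d$ already forces each coordinate function $z_i$ to become a unit: since $z_i$ divides the unit $z_1\cdots z_d$, it too is a unit in $S^{-1}\bC[z]$. Hence $S^{-1}\bC[z]$ contains all the inverses $z_1^{-1},\dots,z_d^{-1}$ and therefore coincides with the ring $\bC(z)$ generated by the coordinate functions and their inverses.

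The final step would be to apply the theorem that a localization of a Noetherian ring is Noetherian. The underlying mechanism is the extension--contraction correspondence for ideals: every ideal $J$ of $S^{-1}\bC[z]$ is the extension $S^{-1}I$ of its contraction $I = J\cap \bC[z]$, and extension commutes with the formation of generating sets. Thus a finite generating set for $I$, which exists by the Noetherian property of $\bC[z]$, produces a finite generating set for $J$, and the ascending chain condition for $\bC(z)$ follows.

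There is no serious obstacle here; the argument is entirely routine once the localization description is in place. The only point requiring a moment's care is the verification that the single element $z_1\cdots z_d$ suffices to generate the localizing set, rather than separately inverting each $z_i$, and this is immediate from the divisibility remark above. Should one prefer a self-contained route avoiding the localization theorem, an alternative is to prove a Laurent analogue of the Hilbert Basis Theorem directly---that $R[z,z^{-1}]$ is Noetherian whenever $R$ is---and then iterate it $d$ times; the inductive step mirrors the classical polynomial argument, with only the mild extra bookkeeping of permitting leading terms of negative degree.
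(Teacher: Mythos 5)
Your proposal is correct and is essentially the paper's own argument: the authors also identify $\bC(z)$ with a localization $S^{-1}\bC[z]$ (using the multiplicative set of all monomials $z^k$, which yields the same ring as your single-element set $\{(z_1\cdots z_d)^n\}$) and then cite the standard fact that localizations of Noetherian rings are Noetherian. Your version simply supplies the routine details that the paper delegates to a reference.
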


\begin{proof}  The argument is elementary. 
(Alternatively, if $S$ is the multiplicative subset $\{z^k:k\in \bZ^d_+\}$ then the ring  $\mathbb{C}(z)$ is isomorphic to the localization $S^{-1} \mathbb{C}[z]$, and so is Noetherian by \cite{Rot10}, Corollary 10.20.)
\end{proof}
\medskip

For the rest of this section we let ${B}$ be a proper $\bC(z)$-module in $\bC(z)\otimes \bC^r$ with primary decomposition 
\[
{B} = Q_1\cap \dots \cap Q_s
\]
{as above, where the $\bC(z)$-modules $Q_i$ are $P_i$-primary.}

\begin{lem} Fix $i$, with $1 \leq i \leq s$. Then there exists a point $\omega(i)\in \bC_*^d$ such that if $p(z)$ is a polynomial in $P_i^*=P_i \cap \bC[z]$ then $p(\omega(i))=0$.
\end{lem}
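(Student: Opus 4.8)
The plan is to exhibit $\omega(i)$ as a point of the affine variety $V(P_i^*) \subseteq \bC^d$ and to argue that this variety must meet the multiplicative torus $\bC_*^d$. Since the statement asks for a common zero of $P_i^*$ with nonvanishing coordinates, the entire issue is to rule out the possibility that every zero of $P_i^*$ lies on some coordinate hyperplane. The two ingredients I would combine are Hilbert's strong Nullstellensatz and the elementary fact that every monomial $z^k$ with $k \in \bZ^d_+$ is a \emph{unit} in the Laurent ring $\bC(z)$.

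First I would record that $P_i^*$ is a proper prime ideal of $\bC[z]$. Since $Q_i$ is $P_i$-primary it is in particular a proper submodule, so $\lambda_1 = \operatorname{id}$ is not nilpotent on $(\bC(z)\otimes \bC^r)/Q_i$; hence $1 \notin P_i$ and $P_i$ is a proper ideal of $\bC(z)$. Being the contraction of the prime ideal $P_i$ along the inclusion $\bC[z] \hookrightarrow \bC(z)$, the set $P_i^* = P_i \cap \bC[z]$ is prime, and it is proper because $1 \notin P_i$. In particular $V(P_i^*)$ is nonempty by the Nullstellensatz.

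The key step is to show $V(P_i^*) \not\subseteq \bigcup_{j=1}^d \{z_j = 0\} = V(z_1\cdots z_d)$. Here I would use that $z_1\cdots z_d = z^{\ul{1}}$ is a unit in $\bC(z)$ and therefore cannot belong to the proper ideal $P_i$; consequently $z_1\cdots z_d \notin P_i^*$. Since $P_i^*$ is prime it equals its own radical, so the strong Nullstellensatz gives $z_1\cdots z_d \notin I(V(P_i^*))$, which is precisely the assertion that $z_1\cdots z_d$ does not vanish identically on $V(P_i^*)$. Thus there is a point $\omega(i) \in V(P_i^*)$ at which $z_1\cdots z_d$ is nonzero, that is, a point all of whose coordinates are nonzero, so $\omega(i) \in \bC_*^d$. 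By the definition of $V(P_i^*)$, every $p(z) \in P_i^*$ then satisfies $p(\omega(i)) = 0$, which is what is required.

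The only substantive point is guaranteeing that the common zero can be chosen in $\bC_*^d$ rather than merely in $\bC^d$, and this is exactly where the Laurent structure is decisive: it forces the prime $P_i^*$ to avoid every monomial, which in turn prevents $V(P_i^*)$ from collapsing into the union of the coordinate hyperplanes. Everything else is the standard ideal--variety dictionary and should be routine.
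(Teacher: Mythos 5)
Your proposal is correct and follows essentially the same route as the paper: nonemptiness of $V(P_i^*)$ via the Nullstellensatz, followed by the observation that the monomial $z_1\cdots z_d$, being a unit in $\bC(z)$, cannot lie in the proper ideal $P_i$, so the strong Nullstellensatz forces a common zero off the coordinate hyperplanes. The only cosmetic difference is that you invoke primality of $P_i^*$ to identify $I(V(P_i^*))$ with $P_i^*$ directly, whereas the paper argues by contradiction that some power $(z_1\cdots z_d)^\rho$ would lie in $P_i^*$; both are equivalent uses of the same theorem.
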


\begin{proof}
To see this note that  the complex variety $V(P_i^*)$ is nonempty by Hilbert's  Nullstellensatz \cite{ati-mac}, since $P_i$ and hence $P_i^*$ is a proper ideal. Moreover, there is a point $\omega(i)$ in this variety which is in $\bC_*^d$. Indeed, if this were not the case then the monomial $z_1\cdots z_d$ would be {zero} on the variety of $P_i$. It then follows from the strong Nullstellensatz (\cite{Rot10}, Theorem 5.99) that for some index $\rho$ the power $(z_1z_2\dots z_d)^\rho $ is in $P_i^*$. This implies $P_i = \bC(z)$ which is a contradiction.
\end{proof}

Write ${B^*}$ for the $\bC[z]$-module  ${B}\cap (\bC[z]\otimes \bC^r)$ and note that $B$ is recoverable from $B^*$ as the set of elements $z^kp(z)$ with $p(z)$ in $B^*$ and $k\in \bZ^d$.
It follows from this that we have the decomposition
\[
{B}^* = Q_1^*\cap \dots \cap Q_s^*
\] 
where the implied modules $Q_i^*$ (the intersections $Q_i \cap (\bC[z]\otimes \bC^r$)) are {$P_i^*$-primary} $\bC[z]$-modules with distinct prime ideals $P_i^*$. Moreover  each prime ideal $P_i^*$ has a root $\omega(i)$ in $\bC^d_*$ (rather than $\bC^d$).
\medskip

\subsection{Modules and dual spaces for power series rings} 
 For each $i = 1, \dots ,s$ and associated root $\omega(i) \in \bC^d_*$, as above, let
${Q_i}^{*b}$ be the ``big" $\bC_{\omega(i)}[[z]]$-module 
generated by the module ${Q_i}^*$, where $\bC_{\omega(i)}[[z]]$ is the ring of {formal power series} in the variables $z_1-\omega(i)_1, z_2-\omega(i)_2, \dots ,$ $ z_d-\omega(i)_d$.
Since $Q_i^*$ is a primary module for the polynomial ring $\bC[z]$ with root $\omega(i)$ it follows from  Proposition \ref{p:bigideal} that $Q_i^* = {Q_i}^{*b}\cap (\bC[z]\otimes \bC^r)$.

Thus  $B^*$ is the set of polynomials $p(z)$ in $\bC[z]\otimes \bC^r$ which lie in the big module ${Q_i}^{*b}$ for each $i$, and so
\begin{equation}\label{bigmoduledecomp}
B^* = (Q_1^{*b}\cap (\bC[z]\otimes \bC^r))\cap \dots \cap (Q_s^{*b} \cap (\bC[z]\otimes \bC^r)).
\end{equation}

The reason for the introduction of this decomposition is that the rings
$\bC_{\omega(i)}[[z]]$ and their finitely generated modules in $\bC_{\omega(i)}[[z]]\otimes \bC^r$ have dual spaces consisting of  \emph{finitely} supported functionals. This follows in the same way as the duality between $\bC(z)$ and $C(\bZ^d)$. At the same time these finitely supported functionals may be represented in different ways, as we see in Proposition \ref{p:diffop}.

We first recall Lefranc's differential operator formalism for scalar-valued trigonometric polynomials, as expressed in the next lemma.

Let $s_i\in \bN$ and let $z_i^{[s_i]}=(z_i+1)(z_i+2)\dots (z_i+s_i)$. A polynomial $q(z)\in \bC[z]$ may be written uniquely as
\[
q(z) = \sum \beta_jz^{[j]}
\]
where $[j] =([j_1],\dots , [j_d])$ and $(\beta_j)$ is a finitely nonzero multi-sequence  with support in $\bZ_+^d$.

\begin{lem}
Let $p(z) = \sum a_kz^k\in \bC[z]$ and let $e_{\omega,q}$ be a $pg$-sequence in $C(\bZ^d)$. Then 
\begin{equation}\label{pg_omega_equn}
\langle p(z), e_{\omega,q} \rangle
= \sum_k a_kq(k)\omega_1^{k_1}\cdots \omega_d^{k_d} 
= \left[\sum \beta_j\partial_j (p(z)z^j)\right]_{z=\omega}
\end{equation}
where $\partial_j$ is the partial derivative for the multi-index $j \in \bZ_+^d$. 
\end{lem}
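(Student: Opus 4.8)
The first equality is nothing more than the definition of the pairing $\langle\cdot,\cdot\rangle$ evaluated on $p(z)=\sum_k a_kz^k$ and the $pg$-sequence $u=e_{\omega,q}$, whose value at $k$ is $u_k=q(k)\omega^k$; since $p$ and $q$ are polynomials every sum in sight is finite, so no convergence issue arises and all interchanges of summation below are legitimate. The entire content therefore lies in the second equality, and the plan is to reduce it to a single monomial identity and then reassemble.

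First I would record, for $k\in\bZ^d$ and $j\in\bZ_+^d$, the combinatorial factor $k^{[j]}=\prod_{i=1}^d(k_i+1)(k_i+2)\cdots(k_i+j_i)$, chosen to match the definition of $z^{[j]}$. Because $\partial_j=\partial_{z_1}^{j_1}\cdots\partial_{z_d}^{j_d}$ factors as a product of commuting one-variable derivatives, each acting only on its own monomial factor, the key computation is
\[
\partial_j\big(z^{k+j}\big)=\prod_{i=1}^d\partial_{z_i}^{j_i}\big(z_i^{k_i+j_i}\big)=\prod_{i=1}^d(k_i+j_i)(k_i+j_i-1)\cdots(k_i+1)\,z_i^{k_i}=k^{[j]}z^k.
\]
Applying this term by term to $p(z)z^j=\sum_k a_kz^{k+j}$ and setting $z=\omega$ then gives $\big[\partial_j(p(z)z^j)\big]_{z=\omega}=\sum_k a_kk^{[j]}\omega^k$.

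To finish I would substitute this into the sum over $j$ and invoke the expansion $q(z)=\sum_j\beta_jz^{[j]}$. Evaluating $z^{[j]}$ at an integer point $z=k$ yields exactly $k^{[j]}$, so $q(k)=\sum_j\beta_jk^{[j]}$, and interchanging the two finite sums gives
\[
\sum_j\beta_j\big[\partial_j(p(z)z^j)\big]_{z=\omega}=\sum_j\beta_j\sum_k a_kk^{[j]}\omega^k=\sum_k a_k\omega^k\sum_j\beta_jk^{[j]}=\sum_k a_kq(k)\omega^k,
\]
which is the middle expression, closing the chain of equalities. There is no genuine analytic obstacle: the argument is purely combinatorial and the only point requiring care is the rising-factorial bookkeeping---the basis $z^{[j]}$ is engineered precisely so that differentiating $p(z)z^j$ exactly $j$ times reproduces the weights $k^{[j]}$ that appear when $q$ is sampled at lattice points. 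I would note in passing that the monomial identity $\partial_j(z^{k+j})=k^{[j]}z^k$ persists for all integer $k$ over $\bC_*^d$, so the formula would remain valid even if $p$ were taken to be a genuine Laurent polynomial; here, with $p\in\bC[z]$, the exponents $k+j$ are nonnegative and everything is elementary.
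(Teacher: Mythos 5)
Your proof is correct and follows essentially the same route as the paper: both rest on the monomial identity $\partial_j(z^k z^j)=k^{[j]}z^k$ together with the expansion $q(z)=\sum_j \beta_j z^{[j]}$, the only difference being that you organise the extension as an explicit interchange of two finite sums where the paper phrases it as linearity in $p$ and then in $q$. No gaps.
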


\begin{proof}Note first that for $p(z) = z^l$, a monomial in $\bC[z]$, we have
\[
\partial_j(p(z)z^j)=\partial_j(z^lz^j) = [\prod_{i=1}^d(l_i+j_i)(l_i+j_i-1)\cdots (l_i+1)]z^l = l^{[j]}z^l. 
\]
Thus, for $q(z)= z^{[j]}$ we have
\[
[\partial_j(p(z)z^j)]_{z=\omega}= q(l)\omega^l = \langle z^{l}, (q(k)\omega^k) \rangle  = \langle p(z), e_{\omega,q} \rangle.
\]
(The pairing here is for $\bC(z)$ and its dual space although we are restricting consideration to polynomials $p(z)$.)
Since the partial differential operators are linear on $\bC[z]$ it follows that the right hand side of the desired equality is linear in $p(z)$. It then follows, by linearity, that the equality holds also for general polynomials $q(z)$.
\end{proof}

For $\omega \in \bC^r_*$ and $(\beta_j)$ a finitely nonzero sequence let us write
 $L_{\omega,\beta}$ for the \emph{differential operator functional} on 
the vector space $\bC_\omega[[z]]$ of formal power series in $z_1-\omega_1, \dots , z_d-\omega_d$ which is
given by
\[
L_{\omega,\beta}: s(z) \to    \left[\sum \beta_j\partial_j (s(z)z^j)\right]_{z=\omega}.
\]

\begin{prop}\label{p:diffop}
The vector space dual of the power series ring $\bC_\omega[[z]]$ is the space of differential operator functionals $L_{\omega,\beta}$.
\end{prop}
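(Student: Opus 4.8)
The plan is to deduce the proposition from the characterisation of the dual as the space of finitely supported functionals already granted in the paragraph preceding the statement, and then to match that space with the $L_{\omega,\beta}$ via an explicit triangular change of basis. First I would record the power-series analogue of the $\bC(z)$--$C(\bZ^d)$ duality: writing a formal power series as $s(z)=\sum_{l\in\bZ_+^d}c_l(z-\omega)^l$, the ring $\bC_\omega[[z]]$ is, with the topology of coordinatewise convergence, the product $\prod_{l\in\bZ_+^d}\bC$, so its dual is the span of the coordinate functionals. Since the coordinate functional extracting $c_l$ equals $\tfrac1{l!}D_l$, where $D_l:s\mapsto[\partial_l s]_{z=\omega}$ (because $\partial_l(z-\omega)^{l'}|_{z=\omega}=l!\,\delta_{l,l'}$), the dual is precisely the linear span of the derivative-evaluation functionals $\{D_l:l\in\bZ_+^d\}$.

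It then remains to show that the linear map $T:(\beta_j)\mapsto L_{\omega,\beta}$, from finitely supported sequences into this span, is a bijection. I would expand $L_{\omega,\beta}$ in the basis $\{D_l\}$ using the multivariable Leibniz rule together with the identity $\partial_{j-i}(z^j)|_{z=\omega}=\tfrac{j!}{i!}\omega^i$ (valid since $z^j=((z-\omega)+\omega)^j$ is a polynomial in $z-\omega$). This gives
\[
\bigl[\partial_j(s(z)z^j)\bigr]_{z=\omega}=\sum_{i\le j}\binom{j}{i}\frac{j!}{i!}\,\omega^i\,D_i(s),
\]
and hence $L_{\omega,\beta}=\sum_i\gamma_i D_i$ with $\gamma_i=\sum_{j\ge i}\binom{j}{i}\tfrac{j!}{i!}\,\omega^i\,\beta_j$. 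In particular each $L_{\omega,\beta}$ is a finite combination of the $D_i$, so it genuinely lies in the dual.

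The final step, which I expect to be the crux, is to read invertibility of $T$ off this formula. The coefficient $\binom{j}{i}\tfrac{j!}{i!}\omega^i$ vanishes unless $j\ge i$ in the product order on $\bZ_+^d$, so the matrix is upper triangular for any ordering refining total degree; within a fixed total degree the only surviving contribution is the diagonal one, $\gamma_i=\omega^i\beta_i$, and this is nonzero precisely because $\omega\in\bC_*^d$. Thus $T$ is block upper triangular with invertible diagonal blocks: injectivity is immediate, and surjectivity follows by solving for $\beta$ one total-degree level at a time, downward from the top degree present in a given finitely supported target $(\gamma_i)$, which keeps $\beta$ finitely supported. Hence $\beta\mapsto L_{\omega,\beta}$ is a bijection onto the dual, which is the assertion. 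The only genuine subtlety is the decisive use of $\omega\neq 0$ in every coordinate to guarantee the diagonal entries $\omega^i$ are nonzero; the restriction to $\bC_*^d$ is exactly what makes the correspondence an isomorphism.
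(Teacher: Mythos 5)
Your proof is correct and follows essentially the same route as the paper's: both identify the dual of $\bC_\omega[[z]]$ with the span of the coefficient (equivalently, derivative-evaluation) functionals and then pass between these and the $L_{\omega,\beta}$ via a triangular change of basis whose diagonal entries are the powers $\omega^i$, nonzero precisely because $\omega\in\bC_*^d$. The only cosmetic difference is that you write out the full Leibniz-rule matrix and invert it degree by degree, whereas the paper runs the equivalent induction on the lexicographic order of multi-indices.
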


\begin{proof}
The dual space of  $\bC_\omega[[z]]$ is the space of finite linear combinations of the natural coefficient functionals. Thus it will be enough to show that for each $j$ the $j^{th}$-coefficient evaluation functional $F_j,$ for $j\in \bZ_+^d$, is given by a differential operator functional $L_{\omega,\beta}$.
Here $F_j$ is defined by linearity and the requirement, in multinomial notation, is that $
F_j((z-\omega)^k) = \delta_{j,k}$ for $  k\in \bZ_+^d. $
Order $\bZ_+^d$ and the corresponding monomials lexicographically. Evidently for $j = (0,\dots ,0)$ the first functional $F_j$ is a differential operator functional. We argue by induction on the lexicographic order. Fix $l\in \bZ_+^d$ and let $\beta$ be the sequence $(\delta_{l,k})_k$. Then 
\[
\left[\sum \beta_j\partial_j (s(z)z^j)\right]_{z=\omega}
= \partial_l (s(z)z^l)_{z=\omega}
=(\partial_l s)(\omega)\omega^l +F(s(z))
\]
where $F$ is a linear functional which is in the linear span of the functionals $F_j$ where $j<l$.
Thus
\[
L_{\omega,\beta}(s(z)) = cF_l(s(z)) + F(s(z))
\] 
where $c=\omega^l$ is nonzero and it follows from the induction hypothesis that $F_l$ has the desired form.
\end{proof}

Returning to vector-valued polynomials note that the vector space dual $(\bC_{\omega}[[z]]\otimes \bC^r)'$
is naturally identifiable with $(\bC_{\omega}[[z]]')\otimes \bC^r$ where
$\bC_{\omega}[[z]]'$ is the dual space of $\bC_{\omega}[[z]]$. Thus we can identify
$(\bC_{\omega}[[z]]\otimes \bC^r)'$ with the space of $r$-tuples
\[
L_{\omega,\ul{\beta}}= (L_{\omega,\beta^1},\dots , L_{\omega,\beta^r})
\]
associated with the set of finite multi-sequences
${\ul{\beta}}= ({\beta^1},\dots , {{\beta^r}})$ where each $\beta^i= (\beta^i_k)$ is a finitely nonzero multi-sequence. 
The vector version of equation \eqref{pg_omega_equn} takes the form
\begin{equation}\label{vector_pg_equn}
\langle p(z), u_{\omega,\ul{q}} \rangle  =  L_{\omega,\ul{\beta}}(p), \quad p(z) \in \bC[z]\otimes \bC^r,
\end{equation}
and in view of Proposition \ref{p:diffop} we can extend this pairing to a pairing
\[
\langle \cdot , \cdot \rangle_\omega : (\bC_{\omega}[[z]]\otimes \bC^r) \times \{u_{\omega,\ul{q}}:\ul{q}(z)\in \bC[z]\otimes \bC^r\} \to \bC
\] 
by {defining}
\begin{equation}\label{vector_pg_equnExtended}
\langle s(z), u_{\omega,\ul{q}} \rangle_\omega  :=  L_{\omega,\ul{\beta}}(s(z))
\end{equation}
where $\ul{q}=(q_1(z),\dots ,q_r(z))$ is the vector of polynomial associated with $\ul{\beta}$ and  $s(z)\in \bC_\omega[[z]]\otimes \bC^r$. 
In this way we describe the dual of the power series space $\bC_{\omega}[[z]]\otimes \bC^r$ in terms which extend the pairing of the submodule $\bC[z]\otimes \bC^r$  with vectorial $pg$-sequences (rather than in terms of sequences with finite support).

The next lemma follows readily as a corollary of Proposition \ref{p:diffop} and the previous observations and is a module variant of a key lemma in Lefranc's argument \cite{lef} for ideals. The term ``orthogonal" is in reference to the extended bilinear pairing above in the case $\omega = \omega(i)$.


\begin{lem}\label{l:mainlemma}Let $\omega(i)$ be a root in $\bC^d_*$ for $Q_i^*$, as above.
Then a polynomial $p(z)$ in $ \bC[z]\otimes \bC^r$ belongs to $Q_i^*$
if and only if it is orthogonal to each vectorial $pg$-sequence $u_{\omega(i),h}$
which is orthogonal to  $Q_i^{*b}$.
\end{lem}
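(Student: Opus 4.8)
The plan is to prove Lemma \ref{l:mainlemma} by establishing a duality between the big module $Q_i^{*b}$ and the space of vectorial $pg$-sequences orthogonal to it, and then transporting this duality back to the polynomial module $Q_i^*$ via Proposition \ref{p:bigideal}. The key observation is that $Q_i^{*b}$ is a finitely generated (indeed closed) submodule of the free module $\bC_{\omega(i)}[[z]]\otimes \bC^r$ over the Noetherian local power series ring, so that the double-annihilator relation $Q_i^{*b} = (Q_i^{*b})^{\perp\perp}$ holds with respect to the pairing $\langle \cdot,\cdot\rangle_\omega$ of \eqref{vector_pg_equnExtended}. Here the annihilator $(Q_i^{*b})^\perp$ lives in the dual space $(\bC_{\omega(i)}[[z]]\otimes \bC^r)'$, which by the discussion following Proposition \ref{p:diffop} is exactly the space of differential operator functionals $L_{\omega(i),\ul{\beta}}$, and these correspond bijectively to vectorial $pg$-sequences $u_{\omega(i),h}$ through equation \eqref{vector_pg_equn}.

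First I would make the reflexivity step precise. Because $\bC_{\omega(i)}[[z]]$ is Noetherian and local, its finitely generated modules have duals consisting of \emph{finitely} supported functionals (as asserted in Section \ref{ss:noetherian}'s preamble), and for such a closed submodule the Hahn-Banach separation argument — exactly as in Lemma \ref{l:doubleperp} — yields $Q_i^{*b} = ((Q_i^{*b})^\perp)^\perp$. Unwinding this via the functional-versus-$pg$-sequence dictionary gives the statement: a power series $s(z)$ lies in $Q_i^{*b}$ if and only if $\langle s, u_{\omega(i),h}\rangle_\omega = 0$ for every vectorial $pg$-sequence $u_{\omega(i),h}$ orthogonal to $Q_i^{*b}$. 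This is the ``big module'' form of the lemma, phrased entirely in the power series setting.

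Next I would restrict this equivalence to polynomials. Suppose $p(z)\in \bC[z]\otimes \bC^r$ is orthogonal to every $pg$-sequence $u_{\omega(i),h}$ that annihilates $Q_i^{*b}$. Since $p(z)$ is in particular an element of $\bC_{\omega(i)}[[z]]\otimes \bC^r$, and since the extended pairing \eqref{vector_pg_equnExtended} restricts on polynomials to the honest pairing \eqref{vector_pg_equn}, the big-module characterisation forces $p(z)\in Q_i^{*b}$. But $p(z)$ is a polynomial, so $p(z)\in Q_i^{*b}\cap(\bC[z]\otimes \bC^r)$, which by Proposition \ref{p:bigideal} equals $Q_i^*$. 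The converse direction is immediate: if $p(z)\in Q_i^* \subseteq Q_i^{*b}$, then any functional annihilating $Q_i^{*b}$ annihilates $p(z)$, and the orthogonality is automatic.

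The main obstacle I anticipate is verifying that the annihilator of the closed submodule $Q_i^{*b}$ in the power series dual is genuinely exhausted by the $pg$-sequence functionals, i.e.\ that the dictionary of Proposition \ref{p:diffop} interacts correctly with the module structure so that orthogonality \emph{to $Q_i^{*b}$} is the same as orthogonality in the extended pairing of \eqref{vector_pg_equnExtended}. One must confirm that the finite-support description of the dual space is compatible with passing to a closed submodule, and that the Hahn-Banach argument applies verbatim in the power series topology. The phrase ``orthogonal to $Q_i^{*b}$'' in the statement is precisely designed to encode $u_{\omega(i),h}\in (Q_i^{*b})^\perp$, so once the reflexivity and the functional identification are in hand, the lemma reduces to an application of Proposition \ref{p:bigideal}; the care lies in checking that the two pairings agree on the polynomial subspace and that no nonpolynomial $pg$-sequence data is lost or spuriously added in the restriction.
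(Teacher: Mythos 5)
Your proposal is correct and follows essentially the same route as the paper: both arguments rest on Hahn--Banach separation of $p(z)$ from the closed module $Q_i^{*b}$ in the power series space, the identification of the dual of $\bC_{\omega(i)}[[z]]\otimes\bC^r$ with the differential operator functionals (equivalently, the vectorial $pg$-sequences) from Proposition \ref{p:diffop}, and Proposition \ref{p:bigideal} to recover $Q_i^*$ as $Q_i^{*b}\cap(\bC[z]\otimes\bC^r)$. Your double-annihilator phrasing is just the contrapositive of the paper's separation-by-contradiction, and you are right to flag the closedness of $Q_i^{*b}$ in the coefficientwise topology as the one point needing care, which the paper also leaves implicit.
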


\begin{proof}
 Let $p(z)$ be a polynomial in  $\bC_{\omega(i)}[[z]] \otimes \bC^r$ that is orthogonal to all vectorial $pg$-sequences that are orthogonal to $Q_i^{*b}$. Suppose that $p(z)$ is not in 
$Q_i^*=Q_i^{*b} \cap (C[z] \otimes C^r)$. 
 Then by the Hahn-Banach theorem there is a continuous linear functional that separates them, which is contradiction since all such functionals are given by the differential operator functionals.
 \end{proof}


\subsection{Shift-invariant subspaces}
The next two lemmas enable the transference of orthogonality and dual space density results between modules in $\bC[z]\otimes \bC^r$ and modules in $\bC(z)\otimes \bC^r$. 

\begin{lem}\label{l:BandBstar} The vectorial $pg$-sequence   $u_{\omega, \ul{h}}$ is orthogonal to the $\bC(z)$-module ${B}$ if and only if it is orthogonal to the $\bC[z]$-module ${B}^*$.
\end{lem}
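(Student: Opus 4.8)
The plan is to exploit the explicit relationship between $B$ and $B^*$ established earlier, namely that $B$ consists precisely of the elements $z^k p(z)$ with $p(z)\in B^*$ and $k\in\bZ^d$. Since orthogonality to $B$ trivially implies orthogonality to the submodule $B^*\subseteq B$, the only content is the reverse implication. First I would suppose that $u_{\omega,\ul{h}}$ is orthogonal to $B^*$ and take an arbitrary element $b(z)\in B$; by the recovery description of $B$ we may write $b(z)=z^{-k}p(z)$ for some $p(z)\in B^*$ and some $k\in\bZ^d_+$ (absorbing signs into $k$), so it suffices to show $\langle z^{-k}p(z),\,u_{\omega,\ul{h}}\rangle=0$.

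The key step is therefore to understand how the pairing interacts with multiplication by the invertible monomial $z^{-k}$, i.e.\ with the shift operators. Recall from the proof of Lemma \ref{l:iffmodule} that $\langle z_i b,\,u\rangle=\langle b,\,W_i u\rangle$, and hence more generally $\langle z^{-k}p,\,u_{\omega,\ul{h}}\rangle = \langle p,\,W^{-k}u_{\omega,\ul{h}}\rangle$, where $W^{-k}$ denotes the corresponding product of inverse shifts. The crucial observation is that the class of vectorial $pg$-sequences with fixed geometric index $\omega$ is invariant under the shift operators: applying $W_i$ to $u_{\omega,\ul{h}}: k'\mapsto \omega^{k'}\ul{h}(k')$ produces $k'\mapsto \omega^{k'-e_i}\ul{h}(k'-e_i) = \omega^{-1}_i\,\omega^{k'}\tilde{\ul{h}}(k')$, which is again a $pg$-sequence $u_{\omega,\tilde{\ul{h}}}$ with the same $\omega$ (the polynomial vector being translated and rescaled by the nonzero scalar $\omega_i^{-1}$). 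Thus $W^{-k}u_{\omega,\ul{h}}$ is itself a vectorial $pg$-sequence with geometric index $\omega$.

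Combining these, $\langle z^{-k}p,\,u_{\omega,\ul{h}}\rangle = \langle p,\,u_{\omega,\ul{h}'}\rangle$ for some $pg$-sequence $u_{\omega,\ul{h}'}$, and since $p\in B^*$ and $u_{\omega,\ul{h}'}$ is orthogonal to $B^*$ by the shift-invariance hypothesis, this pairing vanishes. Here I would need the orthogonality to $B^*$ to hold for \emph{all} $pg$-sequences with index $\omega$ of the relevant shifted form, which is exactly what the statement provides once one notes that the shifted sequence is again a legitimate vectorial $pg$-sequence. I expect the main subtlety, rather than a deep obstacle, to lie in bookkeeping the nonnegativity of the exponents: one must justify writing a general Laurent element of $B$ as $z^{-k}p(z)$ with $k\in\bZ^d_+$ and $p\in B^*$, and confirm that $\omega\in\bC^d_*$ guarantees the scaling factors $\omega_i^{-1}$ are well-defined and nonzero, so that the shifted sequence genuinely has the same geometric index. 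With these points checked the equivalence follows immediately.
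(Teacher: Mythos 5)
Your reduction to showing $\langle z^{-k}p,\,u_{\omega,\ul{h}}\rangle=0$ for $p\in B^*$ and $k\in\bZ_+^d$, and the transfer to shift operators via the adjoint identity, are fine as bookkeeping; so is the observation that a shift of a vectorial $pg$-sequence is again a vectorial $pg$-sequence with the same geometric index. The gap is the final assertion that the shifted sequence $u_{\omega,\ul{h}'}$ is orthogonal to $B^*$. The hypothesis gives orthogonality of the single sequence $u_{\omega,\ul{h}}$ to $B^*$. This does transfer to shifts in one direction, because the pairing of $p$ with such a shift equals $\langle z^{j}p,\,u_{\omega,\ul{h}}\rangle$ with $j\in\bZ_+^d$ and $z^{j}p$ stays inside the $\bC[z]$-module $B^*$; but for the shifts you actually need, the adjoint identity gives $\langle p,\,u_{\omega,\ul{h}'}\rangle=\langle z^{-k}p,\,u_{\omega,\ul{h}}\rangle$, which is precisely the quantity you are trying to prove vanishes. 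So the step ``$u_{\omega,\ul{h}'}$ is orthogonal to $B^*$ by the shift-invariance hypothesis'' is circular: there is no such hypothesis, and establishing it is equivalent to the lemma itself. Note that for a general sequence $u$ (not of $pg$ type) orthogonality to $B^*$ does \emph{not} imply orthogonality to $B$, so some use of the polynomial-geometric structure beyond mere shift-stability of the class is unavoidable.

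The paper closes exactly this gap with a rigidity argument special to $pg$-sequences: for fixed $p\in B^*$ and fixed $\ul{h}$, the function $i\mapsto\langle z^{i}p,\,u_{\omega,\ul{h}}\rangle$ equals $\pi(i)\,\omega^{i}$ for a single polynomial $\pi\in\bC[z]$ (verified on monomials and extended by linearity). Orthogonality to $B^*$ forces $\pi(i)\omega^i=0$, hence $\pi(i)=0$, for all $i\in\bZ_+^d$, so $\pi$ is the zero polynomial and the pairing vanishes for \emph{all} $i\in\bZ^d$, negative multi-indices included. Since $B=\bigcup_i z^iB^*$, this is the propagation from $B^*$ to $B$ that your argument is missing; your proof can be repaired by inserting this polynomial-identity step in place of the circular one.
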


\begin{proof}
Note that for fixed $p(z)= (p_1(z),\dots ,p_r(z))$ in $B^*$ and fixed $h(z)= (h_1(z),\dots ,h_r(z))$ in $\bC[z]\otimes \bC^r$ we have
\[
\langle z^ip(z), u_{\omega,h}\rangle = \sum_{t=1}^r \langle z^ip_t(z), (h_t(k)\omega^k)_k\rangle = \pi(i)\omega^i. 
\]
for some polynomial $\pi(z)$. This is clear if the polynomials $p_t, h_t$ are monomials and so it follows in general by linearity. If these terms are zero for all $i \in \bZ^d_+$ then $\pi(i)$ is zero for all such $i$ and so $\pi(z)$ is the zero polynomial, and hence the terms are equal to zero for all $i \in \bZ^d$.
Since $B$ is the union of the spaces $z^iB^*$, for all multi-indices $i$, the lemma follows.
\end{proof}

\begin{lem}\label{l:Aplusdensity}
Let $A$ be a closed invariant subspace of $C({\bZ^d})\otimes \bC^r$ and let
$A_+\subseteq C(\bZ^d_+)\otimes \bC^r$ be the set of restrictions of sequences $u$ in $A$. Also, let $\P$ be a  invariant linear space of vectorial $pg$-sequences in $A$ whose restrictions to $\bZ^d_+$ form a dense set in $A_+$. Then $\P$ is dense in $A$.
\end{lem}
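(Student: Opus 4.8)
The plan is to exploit the shift-invariance of both $A$ and $\P$ in order to transport the given approximation property from the positive octant $\bZ^d_+$ to all of $\bZ^d$. First I would reformulate density concretely: since $C(\bZ^d)\otimes \bC^r$ carries the topology of coordinatewise convergence, a neighbourhood base at a point $u$ is given by the sets determined by a finite $F\subseteq\bZ^d$ and an $\epsilon>0$, so that $\P$ is dense in $A$ if and only if, for every $u\in A$, every finite $F\subseteq\bZ^d$ and every $\epsilon>0$, there is a $v\in\P$ with $\|v(k)-u(k)\|<\epsilon$ for all $k\in F$. The task thus reduces to producing such a local approximant for each $u$, $F$ and $\epsilon$.

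Next I would use the finiteness of $F$ to slide it into the region where density is already available. Writing $W^m=W_1^{m_1}\cdots W_d^{m_d}$ for $m\in\bZ^d_+$, so that $(W^m u)(j)=u(j-m)$, I would choose $m$ large enough that $F+m\subseteq\bZ^d_+$; such an $m$ exists precisely because $F$ is finite. Since $A$ is invariant, $W^m u\in A$, and the values of $W^m u$ on the finite set $F+m\subseteq\bZ^d_+$ recover those of $u$ on $F$ through the identity $(W^m u)(k+m)=u(k)$ for $k\in F$.

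Now I would invoke the hypothesis. The restriction of $W^m u$ to $\bZ^d_+$ lies in $A_+$, and by assumption the restrictions of elements of $\P$ are dense in $A_+$ for the coordinatewise topology on $\bZ^d_+$; hence there is a $w\in\P$ with $\|w(j)-(W^m u)(j)\|<\epsilon$ for all $j$ in the finite set $F+m$. Finally I would set $v=W^{-m}w$, which lies in $\P$ because $\P$ is invariant. Using $(W^{-m}w)(k)=w(k+m)$, for each $k\in F$ we obtain
\[
\|v(k)-u(k)\|=\|w(k+m)-(W^m u)(k+m)\|<\epsilon,
\]
since $k+m\in F+m$, which is exactly the required approximant and completes the argument.

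There is no serious obstacle here: the argument is a clean translation/invariance manoeuvre whose conceptual content is that $\bZ^d_+$ is cofinal under translation, so every finite observation window can be shifted into the octant where density is already known, and then shifted back using the inverse shifts that preserve $\P$. The only points demanding care are bookkeeping ones, namely fixing the sign conventions for the forward shift $(W_iu)(k)=u(k-e_i)$ and its inverse $(W_i^{-1}u)(k)=u(k+e_i)$, confirming that $F+m\subseteq\bZ^d_+$ can be arranged, and checking that the coordinatewise neighbourhood bases on $A$ and on $A_+$ align so that the $\epsilon$-over-$F$ estimate transfers faithfully. I would also note that closedness of $A$ is not used for the density conclusion itself, the only structural input being the invariance of $A$ and $\P$.
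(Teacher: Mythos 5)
Your proof is correct and is essentially the same argument as the paper's: both exploit shift-invariance to slide the approximation problem into the positive octant, approximate there using the hypothesis on $A_+$, and shift back with the inverse shifts that preserve $\P$. The paper phrases this as $u$ being a coordinatewise limit of back-shifted truncations $(W_1\cdots W_d)^{-n}(u^n)_+$, whereas you work directly with finite observation windows and an $\epsilon$-estimate, which is a cleaner but equivalent organization of the same idea.
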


\begin{proof} 
 Identify $A_+$ with the corresponding set of $\bZ^d$-sequences $(w_k)$ which are zero if $k\notin \bZ^d_+$. Similarly define $\P_+$. Since $A$ is shift-invariant,
each $u\in A$ is the limit of a sequence of elements of the form $(W_1\cdots W_d)^{-n}(u^n)_+$, with  $u^n\in A$. By the hypotheses,  each  $(u^n)_+$ is approximable by elements $w_+$ of $\P_+$ where $w$ a linear combination of $pg$-sequences in $A$. It follows that $u$ is also approximable by the corresponding sequence of elements $(W_1\cdots W_d)^{-n}w$ in $A$. Since these elements are linear combinations of $pg$-sequences in $A$ the lemma follows.
\end{proof}

\begin{thm}\label{t:vectorialsynthesis}
Let $A$ be a closed invariant subspace of $C({\bZ^d})\otimes \bC^r$.
Then there is a finite set of geometric indices 
such that $A$ is the closed linear span
of the vectorial $pg$-sequences in $A$ with geometric indices in this set.
\end{thm}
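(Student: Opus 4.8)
The plan is to prove Theorem \ref{t:vectorialsynthesis} by passing to the dual module and exploiting the Lasker--Noether primary decomposition, then transferring the resulting density statement back to the subspace $A$ via the annihilator duality. First I would set $B = A^\perp$, which by Lemma \ref{l:iffmodule} is a $\bC(z)$-module in $\bC(z)\otimes \bC^r$, and by Lemma \ref{l:doubleperp} we have $A = B^\perp$. Since $\bC(z)$ is Noetherian, $B$ admits a primary decomposition $B = Q_1 \cap \dots \cap Q_s$ with $Q_i$ being $P_i$-primary, and passing to the polynomial module $B^* = B \cap (\bC[z]\otimes \bC^r)$ yields $B^* = Q_1^* \cap \dots \cap Q_s^*$ with roots $\omega(1),\dots,\omega(s) \in \bC^d_*$ as established in Section \ref{ss:noetherian}. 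These roots form the claimed finite set of geometric indices.

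The core step is to show that the vectorial $pg$-sequences in $A$ with geometric indices among $\omega(1),\dots,\omega(s)$ have dense span in $A$. By Lemma \ref{l:Aplusdensity} it suffices to work with restrictions to $\bZ^d_+$ and establish density in $A_+$. Here I would invoke the Hahn--Banach duality: a $pg$-sequence $u_{\omega(i),h}$ lies in $A = B^\perp$ precisely when it is orthogonal to $B$, which by Lemma \ref{l:BandBstar} is equivalent to orthogonality to $B^*$, hence (via the big-module recovery, equation \eqref{bigmoduledecomp}) to orthogonality to each $Q_i^{*b}$. The key algebraic input is Lemma \ref{l:mainlemma}, which characterises membership in $Q_i^*$ as orthogonality to exactly those $pg$-sequences with index $\omega(i)$ that annihilate $Q_i^{*b}$. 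Thus the $pg$-sequences in $A$ with the prescribed indices have annihilator (in $\bC[z]\otimes \bC^r$) equal to $\bigcap_i Q_i^* = B^*$, so their annihilator coincides with that of $A$ itself.

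Once the annihilators agree, I would close the argument by reflexivity. Let $\P$ denote the linear span of the $pg$-sequences in $A$ with indices $\omega(1),\dots,\omega(s)$. The computation above shows $\P^\perp = B^* $ at the polynomial level, and since $B$ is recoverable from $B^*$ by the shift action, $\P^\perp = B = A^\perp$ as subspaces of $\bC(z)\otimes \bC^r$. Taking annihilators again and applying Lemma \ref{l:doubleperp} gives $\ol{\P} = (\P^\perp)^\perp = (A^\perp)^\perp = A$, where the closure is in the topology of coordinatewise convergence. This is precisely the assertion that $A$ is the closed linear span of these finitely-indexed $pg$-sequences.

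The main obstacle I anticipate is the careful bookkeeping in translating between the three module levels---$B$ over $\bC(z)$, $B^*$ over $\bC[z]$, and the big power-series modules $Q_i^{*b}$ over $\bC_{\omega(i)}[[z]]$---and verifying that the orthogonality relations are preserved at each stage. In particular, one must ensure that the finitely-supported differential-operator functionals furnished by Proposition \ref{p:diffop} genuinely exhaust the relevant dual and that Lemma \ref{l:mainlemma} applies simultaneously across all $s$ primary components without the roots interfering. The density transfer in Lemma \ref{l:Aplusdensity} also requires that $\P$ be shift-invariant, which should follow because multiplying a $pg$-sequence $e_{\omega,q}\otimes a$ by a shift preserves both the geometric index $\omega$ and membership in the invariant subspace $A$.
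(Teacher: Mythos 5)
Your proposal is correct and follows essentially the same route as the paper: annihilator duality, the Lasker--Noether primary decomposition of $B^*$ with roots in $\bC^d_*$, the power-series modules $Q_i^{*b}$ together with Lemma \ref{l:mainlemma}, and Hahn--Banach reflexivity to return to $A$ (your double-annihilator ending via Lemma \ref{l:doubleperp} stands in for the paper's explicit use of Lemma \ref{l:Aplusdensity}, both resting on the shift-invariance of the span $\P$). One phrase should be repaired: orthogonality to $B^*$ does not imply orthogonality to each $Q_i^{*b}$, since those modules are larger; the implication you actually need, and do use correctly in the annihilator computation, is the reverse one --- a $pg$-sequence with index $\omega(i)$ that is orthogonal to $Q_i^{*b}$ is orthogonal to $B^*$ and hence lies in $A$.
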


\begin{proof}Let $B$ be the annihilator of $A$ with associated $\bC[z]$-module $B^*$. By  (\ref{bigmoduledecomp}) we have the decomposition
\[
B^* = (Q_1^{*b}\cap (\bC[z]\otimes \bC^r))\cap \dots \cap (Q_s^{*b} \cap (\bC[z]\otimes \bC^r))
\]
associated with any choice of roots $\omega(1),\dots ,\omega(s)$ for the associated primary submodules $Q_i$.
By Lemma \ref{l:mainlemma} a vector polynomial $p(z)$ lies in $B^*$ if and only if for each $1 \leq i\leq s$  it is orthogonal to every vectorial $pg$-sequence $u_{\omega(i),h}$ which is orthogonal to $Q_i^*$.
It follows that the set of all the functionals $L$ in $(\bC[z]\otimes \bC^r)'$ of the form 
\[
L_{\omega(i),h}: p(z) \to \langle p, u_{\omega(i),h}\rangle, \quad h \in Q_i^{*b}, 1\leq i \leq s,
\]  determine membership in  $B^*$. That is, if $L(p(z))=0$ for all such $L$ 
with $L(Q_i^{*b})={0}$, for all $i\in{1,...,s}$ then $p(z)\in B^*$. By the reflexivity of $\bC[z]\otimes \bC^r$ it also follows that this specific set of functionals which annihilate $B^*$ has dense linear span in $(B^*)^\perp$. Let us write $\S_+$ for this subset and $\S$ for the set of corresponding functionals on $\bC(z)\otimes \bC^r$.

By  Lemma \ref{l:BandBstar} the set $\S$ consists of the differential operator functionals that annihilate $B$.  In particular $\S$  is an invariant set for the shift operators and their inverses. By Lemma \ref{l:Aplusdensity} it follows that the linear span of this set is dense in $A$, as desired.
\end{proof}

\begin{rem} 
We have followed the general proof scheme of Lefranc's succinct 1958 paper \cite{lef}. However our arguments also give fuller details when specialised  $r=1$ and ideals. 
The only other account of the proof that we are aware of is in
de Boor and Ron \cite{boo-ron} where applications are made to multivariate spline approximation.

More recently algebraic spectral synthesis has been examined for general discrete groups and is now known to hold for the discrete groups whose torsion-free rank is finite. See Laczkovich and Szekelyhidi \cite{lac-sze}
for further details.
\end{rem}

\begin{rem}
Proposition \ref{p:diffop}, in the scalar case, identifies the dual space of the power series ring as a space of differential operator functionals. 
In combination with the dual space identifications in Section \ref{ss:duality} and the Hahn-Banach theorem this identification shows that the differential operator functionals determine ideal membership.  The following separation theorem is a version of this 
for {constant coefficient} differential operator functionals. 

\begin{thm}\label{t:idealseparation} Let $\I$ be an ideal in the  ring $\bC[z]$ and let $p(z)$ be a polynomial in $\bC[z]$ which is not in $\I$. Then
there is a constant coefficient linear differential operator $D = \sum_{k\in \bZ^d} c_k\partial^{k}$ and $\omega = (\omega_1, \dots , \omega_d)\in \bC^d$ such that $Df(\omega) = 0$ for  $f \in I$ and $Dp(\omega) \neq 0$.
\end{thm}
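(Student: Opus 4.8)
```latex
The plan is to deduce Theorem \ref{t:idealseparation} from the duality machinery already developed, specialised to the scalar case $r=1$. The key preliminary reduction is to pass from an arbitrary ideal $\I$ to the setting where Proposition \ref{p:diffop} directly applies, namely the power series ring localised at a single root. Since $\I$ is a proper ideal (if $\I = \bC[z]$ there is nothing to prove, as every polynomial lies in it), the Lasker-Noether theorem gives a primary decomposition $\I = Q_1 \cap \dots \cap Q_s$ with $Q_i$ being $P_i$-primary for distinct primes $P_i$. Because $p(z) \notin \I$, there must be at least one index $i$ with $p(z) \notin Q_i$. Fixing such an $i$ and a root $\omega = \omega(i) \in \bC^d$ of $P_i$ (whose existence is guaranteed by the Nullstellensatz argument already used in the lemmas above), it suffices to separate $p(z)$ from $Q_i$ by a differential operator functional based at $\omega$.

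The heart of the argument is then Proposition \ref{p:bigideal}, which in the scalar case $r=1$ tells us that the primary ideal $Q_i$ is recovered from the big power series module it generates: $Q_i = Q_i^{b} \cap \bC[z]$, where $Q_i^{b}$ is the ideal generated by $Q_i$ in $\bC_\omega[[z]]$. (One must first translate so that the root becomes the origin, as in the statement of Proposition \ref{p:bigideal}, or equivalently work with power series in $z_1 - \omega_1, \dots, z_d - \omega_d$.) Since $p(z) \notin Q_i = Q_i^{b} \cap \bC[z]$ and $p(z) \in \bC[z] \subset \bC_\omega[[z]]$, it follows that $p(z) \notin Q_i^{b}$ as an element of the power series ring. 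Now I would invoke the Hahn-Banach theorem together with the dual space identification of Proposition \ref{p:diffop}: the dual of $\bC_\omega[[z]]$ is exactly the space of differential operator functionals $L_{\omega,\beta}$, and $Q_i^{b}$ is a closed subspace (being a finitely generated, hence closed, submodule). Therefore there is a functional $L_{\omega,\beta}$ vanishing on $Q_i^{b}$ with $L_{\omega,\beta}(p) \neq 0$.

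The final step is to rewrite the differential operator functional $L_{\omega,\beta}(s) = [\sum_j \beta_j \partial_j(s(z) z^j)]_{z=\omega}$ in the constant-coefficient form $Df(\omega) = \sum_k c_k (\partial^k f)(\omega)$ demanded by the statement. This is a routine reorganisation: expanding $\partial_j(s(z)z^j)$ by the Leibniz rule produces a finite $\bC$-linear combination of terms $(\partial^k s)(\omega)$ with coefficients that are polynomial in $\omega$, so collecting terms yields constants $c_k$ and the operator $D = \sum_k c_k \partial^k$. Because $L_{\omega,\beta}$ vanishes on $Q_i \supseteq \I \cap$ (more precisely, the restriction of $L_{\omega,\beta}$ to $\bC[z]$ annihilates $Q_i$, and since $\I \subseteq Q_i$ it annihilates all $f \in \I$), we obtain $Df(\omega) = 0$ for every $f \in \I$ while $Dp(\omega) \neq 0$, as required.

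I expect the main obstacle to be the careful bookkeeping in the reduction from the ideal $\I$ to a single primary component $Q_i$ and its associated root $\omega$: one must ensure that the chosen functional annihilates all of $\I$ (not merely $Q_i$), which holds because $\I \subseteq Q_i$, but this inclusion direction is easy to conflate with the decomposition $\I = \bigcap Q_i$. The translation of the root to the origin so as to legitimately apply Proposition \ref{p:bigideal}, and then translating back, also requires attention, though it introduces no genuine difficulty. By contrast, the Leibniz-rule rearrangement in the final step is purely mechanical and poses no conceptual obstruction.
```
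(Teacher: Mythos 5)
Your proof is correct and takes essentially the same route the paper intends: Theorem \ref{t:idealseparation} is stated there only in a remark, deferring to the Section 4 machinery (primary decomposition, a root of the associated prime via the Nullstellensatz, Proposition \ref{p:bigideal}, Proposition \ref{p:diffop} and Hahn--Banach), which is precisely what you have assembled in the scalar case. The one assertion that merits a line of justification is that $Q_i^{b}$ is closed in the coefficientwise topology of $\bC_\omega[[z]]$; this follows from the Krull intersection theorem (Theorem \ref{t:krullintersection}) exactly as in the lemma preceding Proposition \ref{p:bigideal}, and the paper makes the same implicit assumption in Lemma \ref{l:mainlemma}.
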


\noindent This result and other applications of Lefranc's theorem are  discussed in Szekelyhidi \cite{sze-book}.
Also Laczkovich \cite{lac} has recently obtained an interesting generalisation of Theorem \ref{t:idealseparation}  for rings with countably many variables and differential operators which are infinite sums. 
\end{rem}

\section{The proofs of Theorem \ref{t:irigid}, Theorem \ref{t:flexthm} and Theorem \ref{t:findimflex}}

The following degree reduction lemma will be used in the proof of Theorem \ref{t:irigid}. We consider the multi-degrees $k$ of the monomials $z^k$ to be ordered according to the lexicographic ordering on $\bZ^d_+$.

\begin{lem}\label{l:degreereduce}
Let $u: k \to \omega^kh(k)$ be a nonzero vectorial $pg$-sequence in $\bC[z]\otimes \bC^r$ with
$h(z)=(h_1(z), \dots , h_r(z))$ where $h_i(z)$ has multi-degree $\delta(i) \in \bZ^d$ and let $A_0$ be the (unclosed) linear span of the $\bZ^d$-translates of $u$. If $|\delta(i)|\geq 2$ for some $i$ then there exists a nonzero  vectorial $pg$-sequence $w: k\to \omega^kg(k)$ in $A_0$ with $g(z)$ a nonconstant linear vector-valued polynomial in $\bC[z]\otimes \bC^r$.
\end{lem}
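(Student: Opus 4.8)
The plan is to realise the degree reduction by repeatedly applying the twisted backward-difference operators
\[
D_i := I - \omega_i W_i, \qquad 1 \le i \le d,
\]
each of which is a $\bC$-linear combination of the identity and a single shift. Since $A_0$ is the span of all $\bZ^d$-translates of $u$ it is shift-invariant, so each $D_i$ maps $A_0$ into itself, and hence so does any composition of the $D_i$ applied to $u$. This keeps every intermediate sequence inside $A_0$, which is what the conclusion requires.

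First I would record the action of $D_i$ on a $pg$-sequence. Using $(W_i u)(k) = u(k-e_i)$ and $\omega \in \bC_*^d$, a direct computation gives $\omega_i (W_i u_{\omega,h})(k) = \omega^k h(k-e_i)$, so that
\[
D_i u_{\omega,h} = u_{\omega,\Delta_i h}, \qquad (\Delta_i h)(k) := h(k) - h(k-e_i),
\]
i.e. $D_i$ acts on the polynomial factor as the backward difference $\Delta_i$. The elementary point to note is that $\Delta_i$ strictly lowers total degree: the top homogeneous part of $h(k-e_i)$, viewed as a polynomial in $k$, coincides with that of $h(k)$, so these leading parts cancel and $\deg \Delta_i h \le \deg h - 1$, while the resulting leading part (of degree $\deg h - 1$) is precisely $\partial_i$ applied to the leading part of $h$.

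Then I would iterate. Writing $D = \max_i |\delta(i)| = \deg h \ge 2$ for the total degree and $H$ for the nonzero degree-$D$ homogeneous part of the vector $h$, I would pick a component $H_i$ of degree $D$, a monomial $z^\alpha$ occurring in $H_i$ with $|\alpha| = D$, and a multi-index $\gamma \le \alpha$ with $|\gamma| = D-1$. Setting $\mathcal{D} = \prod_j D_j^{\gamma_j}$ and $g := \Delta^\gamma h$, the preceding remarks give $\mathcal{D} u = u_{\omega,g} \in A_0$ with $\deg g \le D - |\gamma| = 1$, and with degree-$1$ part equal to $\partial^\gamma H$.

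The one point requiring care — and the step I would flag as the crux — is that $\partial^\gamma H \ne 0$, i.e. the iterated differentiation does not annihilate the leading term. This follows because $\partial^\gamma$ sends distinct monomials to distinct nonzero scalar multiples of monomials, so no cancellation can occur: the monomial $z^\alpha$ in $H_i$ yields the term $c\, z^{\alpha-\gamma}$ with $c \ne 0$ and $|\alpha-\gamma| = 1$. Hence $g$ is a genuinely nonconstant vector-valued polynomial of total degree exactly $1$, and since $\omega \in \bC_*^d$ the sequence $w := u_{\omega,g}$ is nonzero. This $w$ is the required vectorial $pg$-sequence in $A_0$.
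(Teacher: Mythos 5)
Your proof is correct and uses essentially the same mechanism as the paper's: the finite-difference operators $I-\omega_jW_j$ built from translates act on the polynomial factor as backward differences and are iterated to reduce degree within $A_0$ (the paper tracks lexicographic multi-degree rather than total degree, but the idea is identical). Your version is in fact more explicit on two points the paper's terse proof leaves implicit, namely the scalar twist $\omega_j$ needed so that each intermediate sequence is again a $pg$-sequence with the same multi-factor $\omega$, and the non-vanishing of the final linear part via the choice of $\gamma\le\alpha$ under a monomial actually occurring in the top homogeneous component.
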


\begin{proof}
Let $p(z)\in \bC[z]$ with $z^k=z_1^{k_1}\cdots z_d^{k_d}$  the leading term of $p(z)$ in the lexicographic order. If $k_1\geq 2$ then $p(z)-p(z_1-1,z_2,\dots ,z_d)$ is a polynomial of lower multi-degree. The lemma follows by successively repeating such degree reduction. Specifically, suppose that $\delta\in \bZ^d_+$ is the largest multi-degree for the coordinate functions of $h_i(z), 1 \leq i \leq r$, possibly appearing for several values of $i$. This, by definition, is the multi-degree  of  $h(z)$ and is the maximum of the multidegrees $\delta(i), 1 \leq i \leq s$. Let $\delta_j$ be the first nonzero exponent for $\delta$. Then the vector-valued polynomial $h(z) - (W_j\otimes I_r)h(z)$ is in $A_0$ and has lower multi-degree.
\end{proof}

\begin{proof}[Proof of Theorem \ref{t:irigid}]
Note that (v) is equivalent to (i) by the duality assertions of Lemma 4.1. Also (i) evidently implies (ii). To see that (ii) implies (i) we must show that if there is a first-order flex which is not a rigid motion flex
then in fact there exists an exponential flex which is a nonrigid motion flex. This conclusion follows immediately from Theorem \ref{t:vectorialsynthesis} which shows that in fact there must exist a nonrigid motion flex  $u_{\omega, \ul{h}}$.

Assertions (iii) and (iv) are equivalent, by the discussion preceding Definition \ref{d:flaxlatticerigid}, and they are implied by (i). 

It remains to show that (iii) is a sufficient condition for (i). Assume the contrary, that (i) does not hold and (iii) holds. Once again, by  Theorem \ref{t:vectorialsynthesis}, there exists a nonrigid motion flex  $u= u_{\omega, \ul{h}}$. Suppose first that $\omega = \ul{1}$. Since (iii) holds there is no strictly periodic nonrigid motion flex and so not all of the coordinate polynomials $h_1(z), \dots , h_d(z)$ can be constant polynomials. 
If they are all linear or constant polynomials then $u$ is a flexible lattice periodic flex and this is a contradiction. 
However, in general
we may apply Lemma \ref{l:degreereduce} to reduce to this case and so once again obtain the desired contradiction.
Suppose, finally, that $\omega \neq \ul{1}$. Then by the proof of Lemma \ref{l:degreereduce} we may successively obtain flexes with reduced  multi-degrees to obtain a geometric flex of the form $u_{\omega}\otimes a$. This means that $\omega$ is in the geometric spectrum which is a contradiction.
\end{proof}

\begin{proof}[Proof of Theorem \ref{t:flexthm}]For a given periodic structure
the first order flex space $\F(\C;\bC)$ is the linear space dual of the $\bC(z)$-module $M(\bC)^*$ under the natural pairing, as in Section \ref{ss:duality}. Thus the theorem follows from Theorem \ref{t:vectorialsynthesis} and its proof.
\end{proof}

\begin{proof}[Proof of Theorem \ref{t:findimflex}]
If $\Gamma(\C)$ is infinite then the flex space is infinite-dimensional since a finite set of geometric flexes with distinct periodicity factors is linearly independent. On the other hand if $\Gamma(\C)$ is a finite set $\omega(1), \dots ,\omega(s)$ then the $\bC[z]$-module $M(\C)^*$ has a primary decomposition of length $s$ with primary module $Q_i^*$ having the unique root $\omega(i)$, for $1\leq i \leq s$. It follows that the annihilator of each $Q_i^*$ is  finite-dimensional and that the annihilator of $M(\C)^*$, being the closed span of these spaces, is finite-dimensional. By Theorem \ref{t:flexthm} this annihilator is equal to the first-order flex space of $\C$ and so the proof is complete. \end{proof}

\section{first-order rigid and yet continuously flexible}

We now consider direct geometric arguments to show that a crystallographic bar-joint framework may be continuously flexible even when it is first-order rigid. This phenomenon is not possible for finite bar-joint frameworks (Asimow and Roth \cite{asi-rot}) since one may use the algebraic variety structure of the configuration space to show that the existence of a continuous flex implies the existence of a differentiable flex.

Consider first the semi-infinite periodic strip framework $\Q_{\rm right}= (G,p)$ suggested by Figure \ref{f:InfRig} where
the triples of joints $\{A,X,Q\}, \{B,Y,S\}, \dots $ are  collinear. We claim that this is first-order rigid.

\begin{center}
\begin{figure}[ht]
\centering
\includegraphics[width=5cm]{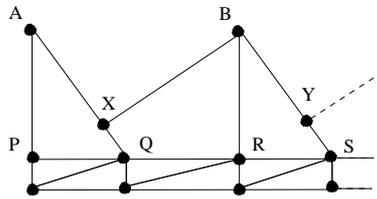}
\caption{The semi-infinite strip framework $\Q_{\rm right}$.}
\label{f:InfRig}
\end{figure}
\end{center}

To see this suppose, by way of contradiction, that there is a nonzero velocity field $u$ 
which assigns zero velocities to the
the joints lying on and below the line through $PQRS$. Let $u_X, u_Y, u_Z, \dots $ be the velocities for the joints $X,Y,Z, 
\dots $.  One of these velocities must be nonzero and without 
loss of generality we may assume  $u_X\neq 0$. Thus $u_B\neq 0$ and is in the direction of the 
positive $x$-axis. However, $u_S=0$ and $B,Y,S$ are collinear and so 
this is a contradiction since there is no finite  velocity $u_Y$ such that $u_B, u_Y, u_S$ satisfy the flex conditions for the edges $BY$ and $YS$. 

We next claim that for a suitable choice of geometry the framework is continuously flexible.

Assume first that  $|XQ| < |QR|$ and  $|QB| >|XB| >|RB|$. Consider the 
finite subframework,
$(G_1,\pi)$ say, supported by the labelled vertices and the four vertices below $P,Q,R,S$.
For this subframework, consider the joints $P,Q,R,S$ as fixed.
Let $AP$  rotate at constant speed
through a clockwise angle $t>0$, so that the bar $XQ$ (with infinite initial velocity)  rotates continuously clockwise to
achieve a horizontal position corresponding to the final clockwise angle $t=t_1$ say.
The induced angular positions $\theta(t)$ of $BR$ in this motion increase  
first to a local maximum, $\theta_{max}$, when $QX$ and $XB$ are  co-linear, and then decreases through positive values to a final value 
$\theta_{fin}=\theta(t_1)$. Assume now that, 
\[
|XB | \geq \sqrt{|BR|^2+(|RQ|-|RX|)^2}
\]
so that $\theta_{fin} >0$. It follows that the  range of the continuous function 
$\theta:t \to \theta(t),$ for $ 0\leq t \leq t_1,$ is included in the  range of its argument $t$.
Iterating this inclusion principle it follows that 
the continuous flex $\pi(t)$ of $(G_1,\pi)$, with flex parameter  $0\leq t\leq  
t_1$, is extendible to a continuous flex $t \to p(t), 0\leq t \leq t_1,$ of the framework $\Q_{\rm right}$.

The continuous flex $p(t)$, with full parameter range $0\leq t \leq t_1$, does not extend to a continuous flex of the two-sided periodic strip, 
$\Q$ say. Indeed the maximum possible positive angular deviation of any vertical bar of $\Q$, such as $BR$, is limited by the colinearity position of bars $QX$ and $XB$ to the \emph{left} of $BR$.
However, we claim that $\Q$ is continuously flexible, with the angular motions of all the vertical bars taking place within the range $0\leq\theta \leq \theta_{max}$. 

To see this consider again the angle propagation function $\theta : t \to \theta(t)$, defined for $0\leq t \leq t_1$. 
Let
$t= t_{\rm fix}$ be the positive solution of
$\theta(t)= t$ and note that $t_{\rm fix} < t_max$. As we have observed, the angular motion or position of the $n$-th vertical bar to the \emph{right} is governed by the
iterates of $\theta$ and it follows that as $t$ tends to $t_{\rm fix}$
the angular inclination of each vertical bar on the right converges to $t_{fix}$. 

We claim that the continuous motion of $(G_1,\pi)$ parametrised by $0\leq t \leq t_{fix}$, can be extended to the left strip of $\Q$ and hence defines a continuous flex of $\Q$.
To see this consider once more the finite subframework linking $AP$ and $BR$, but  with $BR$ providing the driving angular displacement and flex parameter $s\geq 0$. The leftward angle propagation function is the inverse function $ s\to \theta^{-1}(s)$. This is a smooth {decreasing} function, well defined for the range $0<s \leq \theta_{max}$, with derivative $0$ at $s=0$. It follows that the motion is extendible to the left hand side and the claim follows.

It is now straightforward to construct a crystal framework in $\bR^2$, which is continuously flexible and first-order rigid, by taking parallel copies of the strip framework $\Q$ and rigidly connecting their rigid base subframeworks in a periodic manner.

\begin{rem}
The continuous  flex $t \to p(t), 0\leq t \leq t_{\rm fix},$ of the strip framework $\Q$
 may be reparametrised in terms of the inclination angle interval $0\leq \gamma \leq \gamma_n$ of any fixed vertical bar. However, any such parametrisation fails to provide a smooth flex since the derivative at time zero with respect to $\gamma$ (the initial velocity) for any moving joint to the right of this vertical, is infinite.
\end{rem}

\subsection{Aperiodic phase transitions}
The continuous flex $t \to p(t)$  of the strip framework $\Q$ adopts aperiodic positions for each intermediate value of $t$, with  $0< t < t_{\rm fix}$, while the terminal position, for $t=t_{\rm fix}$, is a periodic strip framework which we denote as $\Q_1$.
Thus $\Q_1$  is a tilted placement of $\Q$ and we can view the motion as an \emph{aperiodic phase transition} between 2 periodic states. By varying the initial geometry, so that in the initial periodic position $A, X, Q$ are not collinear, one can also construct strip frameworks with {aperiodic phase transitions} which are smooth paths.
By embedding strip frameworks such as these in higher dimensional constructions one can obtain 3D periodic frameworks with similar aperiodic phase transitions between crystal states. It would be interesting to discover if such locally chaotic transitions between periodic states could serve as a model for abrupt transitions in material crystals, such as martensitic changes of state. See Anwar et al \cite{anw-et-al} for example.

\bibliographystyle{abbrv}
\def\lfhook#1{\setbox0=\hbox{#1}{\ooalign{\hidewidth
  \lower1.5ex\hbox{'}\hidewidth\crcr\unhbox0}}}

\end{document}